\documentclass[aps,pra,twocolumn,letterpaper,reprint]{revtex4-1}
\pdfoutput=1
\usepackage[utf8]{inputenc}
\usepackage[T1]{fontenc}
\usepackage{amsmath, amsthm, amssymb, mathtools, dsfont}
\usepackage{newtxtext} %
\usepackage[smallerops]{newtxmath} %
\usepackage{graphicx}
\usepackage{bbm}
\usepackage[normalem]{ulem} 
\usepackage[dvipsnames]{xcolor}
\usepackage{enumitem} 
\usepackage{booktabs} 
\usepackage{array} 
\usepackage[export]{adjustbox}

\usepackage{epstopdf}

\setlength{\belowcaptionskip}{-10pt}

\usepackage[colorlinks=true,linkcolor=blue,citecolor=blue,urlcolor=blue]{hyperref}
\usepackage{geometry}
\geometry{tmargin=2cm,bmargin=2cm,lmargin=1.5cm,rmargin=1.5cm}

\makeatletter 
\theoremstyle{definition}
\newtheorem{definition}{Definition}
\theoremstyle{definition}
\newtheorem{lemma}{Lemma}
\theoremstyle{definition}
\newtheorem{theorem}{Theorem}
\theoremstyle{definition}
\newtheorem{proposition}{Proposition}
\theoremstyle{definition}

\usepackage{MnSymbol} 
\newcommand\trm{\textrm}

\renewcommand\H{\mathcal{H}}
\newcommand{\1}{\mathds{1}}
\newcommand{\defeq}{\coloneqq}

\newcommand{\ot}{\otimes}
\newcommand{\op}{\oplus}

\newcommand{\bra}[1]{\langle #1 \rvert}
\newcommand{\ket}[1]{\lvert #1\rangle}
\newcommand{\braket}[2]{\langle #1| #2\rangle}
\newcommand{\ketbra}[2]{\lvert {#1}\rangle \langle {#2}\rvert}

\newcommand{\proj}[1]{\ketbra{#1}{#1}}
\newcommand{\modu}[1]{\lvert #1 \rvert}

\DeclareMathOperator{\tr}{tr}

\DeclareMathOperator{\im}{im}
\DeclareMathOperator{\supp}{supp}
\DeclareMathOperator{\rank}{rank}
\DeclareMathOperator{\vecr}{vec} 
\newcommand{\ncl}{\nonumber\\}

\newcommand*{\mathtiny}{\scriptscriptstyle}
\newcommand*{\inv}{^{-1}}
\newcommand*{\ginv}{^{-}}
\newcommand*{\ad}{^{\dagger}}


\newcommand{\Sc}{\mathcal S}
\newcommand{\Qc}{\mathcal Q}
\newcommand{\mio}{{\operatorname{MIO}}}
\newcommand{\bic}{{\operatorname{BIC}}}
\newcommand{\pic}{{\operatorname{PIC}}}

\newcommand{\inc}{{\operatorname{inc}}}
\newcommand{\povm}{{\operatorname{POVM}}}
\newcommand{\Cb}{\mathds C}
\newcommand{\Rb}{\mathds R}
\newcommand{\E}{\textbf{E}}
\renewcommand{\P}{\textbf{P}}
\newcommand{\Ec}{\mathcal E}
\newcommand{\Ect}{{\mathtiny\Ec}}
\newcommand{\Fc}{\mathcal F}
\newcommand{\Ic}{\mathcal{I}}
\newcommand{\Mc}{\mathcal{M}}
\newcommand{\id}{\operatorname{id}}
\newcommand{\rel}{\operatorname{rel}}
\newcommand{\He}{\H_\Ect}
\newcommand{\Pie}{\Pi_\Ect}
\newcommand{\Se}{\Sc_\Ect}

\makeatother 

\begin{document}
\title{Resource theory of coherence based on positive-operator-valued measures}
\author{Felix Bischof}
\email{felix.bischof@hhu.de}
\author{Hermann Kampermann}\author{Dagmar Bru\ss}
\affiliation{Institut f\"ur Theoretische Physik III, Heinrich-Heine-Universit\"at D\"usseldorf,
Universit\"atsstra\ss e 1, D-40225 D\"usseldorf, Germany}
\date{\today}

\begin{abstract}
Quantum coherence is a fundamental feature of quantum mechanics and an underlying requirement for most quantum information tasks. In the resource theory of coherence, incoherent states are diagonal with respect to a fixed orthonormal basis, i.e., they can be seen as arising from a von Neumann measurement.
Here, we introduce and study a generalization to a resource theory of coherence defined with respect to the most general quantum measurement, i.e., to an arbitrary positive-operator-valued measure (POVM).
We establish POVM-based coherence measures and POVM-incoherent operations which coincide for the case of von Neumann measurements with their counterparts in standard coherence theory.
We provide a semidefinite program that allows to characterize interconversion properties of resource states, and exemplify our framework by means of the qubit trine POVM, for which we also show analytical results.
\end{abstract}
\maketitle

Quantum resource theories (QRTs)~\cite{brandao2015reversible,liu2017resource,chitambar2018quantum} 
provide a structured framework in which  quantum properties such as entanglement,
coherence and purity are described in a quantitative way. Every QRT is based on the notions of free states (which do not contain the resource) and  free operations (which cannot create the resource). Building on these basic constituents, QRTs allow to determine, for a given quantum state, its amount of the resource under consideration, the optimal distillation of the resource, and the possibility of interconversion between resource states via free operations.

In recent years, the resource theory of quantum coherence has received much attention~\cite{baumgratz2014quantifying,winter2016operational,streltsov2016quantum,streltsov2018maximal}.
In the standard resource theory of coherence, the free states or incoherent states
are states that are diagonal in a fixed orthonormal basis of a $d$-dimensional Hilbert space $\H$. 
Incoherent states $\rho_{\text{inc}}$ can thus also be seen as arising from a von Neumann measurement $\P=\{P_i\}$ in this basis, i.e., $\rho_{\inc} = \sum_i^dP_i \sigma P_i$ for some state $\sigma\in\Sc$, where $\Sc$ denotes the set of quantum states on $\H$, and the measurement operators $P_i$ are mutually orthogonal projectors of rank one and form a complete set, i.e., $\sum_i^d P_i=\1$. Coherent states are those which are not of the above form.
This notion of coherence has been generalized in two directions. In~\cite{theurer2017resource}, the requirement of orthogonality of the basis was lifted. In~\cite{aberg2006quantifying},  \r{A}berg proposed a framework that can be seen as the definition of coherence with respect to a general projective measurement, where the mutually orthogonal measurement operators $P_i$ may be of higher rank. In this generalized resource theory of coherence the free states are block-diagonal.

It is an important question whether the notion of coherence as an intrinsic quantum property of states can be further extended and formulated with respect to the most general quantum measurements, i.e., positive-operator-valued measures (POVMs). In this letter, we answer this question in the affirmative by introducing a resource theory of quantum state coherence based on arbitrary POVMs. More precisely, we establish a \emph{family} of POVM-based resource theories of coherence, as each POVM leads to a different resource theory. In the special case of rank-1 orthogonal projective measurements, our theory coincides with standard coherence theory. A motivation for our work is the fact that POVMs are generally advantageous compared to projective measurements, see \cite{oszmaniec2017simulating} for a survey. In addition, our approach will identify the resource that is necessary to implement experimentally a general measurement on a given state. 

For a POVM-based coherence theory, the first challenge is to identify a meaningful notion of free, POVM-incoherent, states. This is achieved via the Naimark theorem~\cite{peres2006quantum,decker2005implementation} which states that any POVM can be extended to a projective measurement in a larger space. Our concept of POVM-coherence of states in $\Sc$ is linked to a generalized resource theory of coherence from \cite{aberg2006quantifying} in the extended (Naimark) space, for which we denote the set of states as $\Sc'$. We will show that our resource theories are well-defined as they do not depend on the choice of Naimark extension. 

Conceptually, our work describes a novel way to construct resource theories. Quantum states and operations from the system space are embedded into a larger space which is equipped with a resource theory, providing a derivated resource theory on the original space. For this reason, our work does not follow the standard construction method for a resource theory: our starting point is the definition of a POVM-based coherence measure, from which we construct free states and operations (rather than following the usual reverse order). We then provide a semidefinite program that characterizes all POVM-incoherent operations, making them accessible for efficient numerical computation. Finally, we apply our framework to the example of the qubit trine POVM, for which we study the POVM-based coherence measure and characterize all incoherent unitaries. 
 
In the following, we present our main results and their interpretation. Technical details and proofs from every section of the main text can be found in the corresponding section of the Supplemental Material~\cite{supp}.

\paragraph*{POVM and Naimark extension---}
A POVM on $\H$ with $n$ outcomes is a set $\E=\{E_i\}_{i=1}^{n}$ of positive 
semidefinite operators $E_i\geq0$, called POVM elements or effects, which satisfy $\sum_i^nE_i=\1$.
The probability to obtain the $i$-th outcome when measuring $\rho$ is given by $p_i(\rho)=\tr[E_i\rho]$. We denote by $\{A_i\}$ a set of measurement operators of $\E$, i.e., $E_i=A_i^\dagger A_i$. Each measurement operator  $A_i$ is only fixed up to a unitary $U_i$, as the transformation $A_i \to U_i A_i$ leaves $E_i$ invariant. 
The $i$-th post-measurement state for a given $A_i$ is $\rho_i=\frac1{p_i}A_i\rho A_i^\dagger$.

Let us remind the reader that according to the Naimark theorem~\cite{peres2006quantum,decker2005implementation}, every POVM $\E=\{E_i\}_{i=1}^{n}$ on $\H$, if embedded in a larger Hilbert space, the Naimark space $\H'$ of dimension $d'\geq d$, can be extended to a projective measurement $\P=\{P_i\}_{i=1}^{n}$ on $\H'$. The most general way to embed the original Hilbert space $\H$ into $\H'$ is via a direct sum, requiring
\begin{align}\label{eq:naimark}
\tr[E_i\rho]=\tr[P_i(\rho\oplus 0)]
\end{align}
to hold for all states $\rho$ in $\Sc$, where $\op$ denotes the orthogonal direct sum, and $0$ is the zero matrix of dimension $d'-d$. We call any projective measurement $\P$ which fulfills Eq.~(\ref{eq:naimark}) a Naimark extension of $\E$.

The embedding into a larger-dimensional space can also be performed via the so-called \emph{canonical} Naimark extension~\cite{wilde2013sequential,sparaciari2013canonical}: 
one attaches an ancilla or probe (initially in a fixed state $\proj{1}$) via a tensor product. 
We denote the map that performs the embedding by $\Ec[\rho]=\rho\ot\proj{1}$ and the space of embedded states by $\Se=\Ec[\Sc]$.
A suitable global unitary $V$ describes the interaction between system and probe such that the resulting state is $\rho'\defeq V(\rho\ot\proj{1})V^\dagger$. 
A von Neumann measurement on the probe leads to the same probabilities $p_i$ as the POVM if
\begin{align}\label{eq:canonaim}
\tr[E_i\rho]&= \tr[(\1\ot\proj{i})\rho']=\tr[P_i(\rho\ot\proj{1})]
\end{align}
holds for all states $\rho$ in $\Sc$. Here we have included the unitary $V$ into the projective measurement, i.e., $P_i:=V^\dagger(\1\ot\proj{i})V$. Thus, $P_i$ has rank $d$.
This type of Naimark extension is not optimal in terms of smallest additionally
required dimension~\cite{chen2007ancilla}, but its structure allows for a simpler derivation of general results, and directly describes the possibility to implement a POVM in an experiment. Both described types of Naimark extensions are not unique.

\paragraph*{Resource theory of block coherence---}
\r{A}berg~\cite{aberg2006quantifying} introduced general measures for the degree of superposition in a mixed quantum state with respect to orthogonal decompositions of the underlying Hilbert space, thus pioneering the resource theory of coherence. Here we translate his work into the present-day language of resource theories and will refer to it as resource theory of block coherence.

The set $\Ic$ of block-incoherent (or free) states $\rho_{\inc}$ arises via a projective measurement $\P=\{P_i\}_{i=1}^{n}$ on the set of quantum states $\Sc$, 
namely~\cite{aberg2006quantifying}
\begin{align}
\rho_{\inc}=\sum_iP_i\sigma P_i = \Delta[\sigma],\quad \sigma\in\Sc,
\end{align}
where the rank of the orthogonal projectors $P_i$ is arbitrary, and we have defined the \emph{block-dephasing map} $\Delta$. In this framework, coherence is not ``visible'' within a subspace given by the range of $P_i$, but only across different subspaces.
If all $P_i$ have rank 1, the standard resource theory of coherence is recovered.
Note that here we have intentionally chosen the same symbol $P_i$ as in Eq.~(\ref{eq:canonaim}), as we shortly identify the two.

We refer to the largest class of (free) operations that cannot create block coherence as (maximally) \emph{block-incoherent} (BIC) operations. A map $\Lambda_{\bic}$ on $\Sc$ is element of this class iff it maps any block-incoherent state to a block-incoherent state, i.e.,
\begin{align}\label{mio}
\Lambda_{\bic}[\Ic]\subseteq \Ic,
\end{align}
or equivalently $\Lambda_{\bic}\circ\Delta=\Delta\circ\Lambda_{\bic}\circ\Delta$. In standard coherence theory this class is referred to as maximally-incoherent operations $(\mio)$.

The amount of block coherence contained in a state $\rho$ with respect to a projective measurement {\P} can be quantified by suitable measures. We call a realvalued positive function $C(\rho,\P)\geq0$ a block-coherence measure iff it fulfills
\begin{enumerate}[label=\roman{*}.]
\item \emph{Faithfulness:} $C(\rho,\P)=0 \Leftrightarrow \rho\in\Ic$,
\item \emph{Monotonicity:}
$C(\Lambda_{\bic}[\rho],\P)\leq C(\rho,\P)$ for all $\Lambda_{\bic}$.
\end{enumerate}
Several block-coherence measures were introduced in~\cite{aberg2006quantifying}, and a general class of measures can be derived from distances that are contractive under quantum operations~\cite{supp}. An important example for a suitable measure is the relative entropy of block coherence, defined as 
\begin{align}\label{relentblkcoherence}
C_{\rel}(\rho,\P)=\min_{\sigma\in\Ic}S(\rho||\sigma)=S(\Delta[\rho])-S(\rho),
\end{align}
where $S(\rho||\sigma)=\tr[\rho\log\rho-\rho\log\sigma]$ denotes the quantum relative entropy and $S(\rho)=-S(\rho||\1)$ is the von Neumann entropy. In standard coherence theory, the relative entropy of coherence has several important operational meanings~\cite{winter2016operational,yuan2015intrinsic,yuan2016interplay}, e.g., it quantifies the distillable coherence and coherence cost under the class $\mio$~\cite{streltsov2016quantum}.

\paragraph*{POVM-based coherence measures---}
The main idea of our approach is to define the coherence of a state $\rho$ with respect to the POVM $\E$ via its canonical Naimark extension. This concept is visualised in Fig.~\ref{fig:povmcoh}.

\begin{figure}[!h]
\centering
\includegraphics[width=\columnwidth]{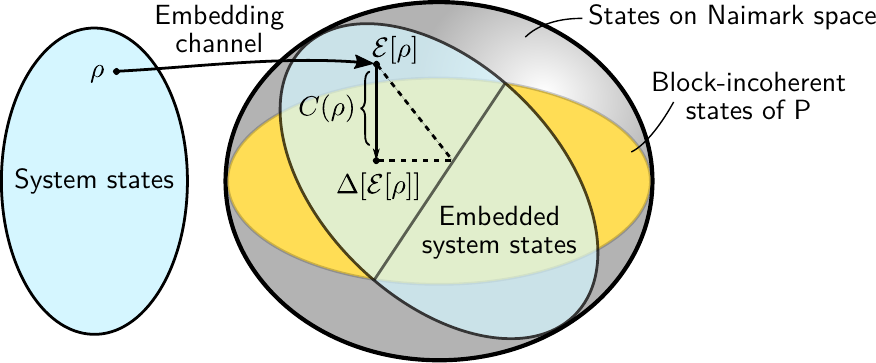}
\caption{\label{fig:povmcoh}
We introduce a resource theory of POVM-based coherence by making use of the Naimark construction. Quantum states $\rho$ are embedded as $\Ec[\rho]= \rho\ot\proj{1}$ to act on a higher-dimensional Hilbert space (Naimark space). The POVM $\E$ is extended to a projective measurement $\P$ on the Naimark space, which defines a set of block-incoherent states $\Ic$. 
The POVM-coherence measure $C(\rho,\E)$ is the distance between $\Ec[\rho]$ and its projection $\Delta[\Ec[\rho]]$ onto block-incoherent states.}
\end{figure}

\begin{definition}[POVM-based coherence measure]\label{def:measure}
Let $C(\rho',\P)$ be a unitarily-invariant block-coherence measure on $\Sc'$.
The POVM-based coherence measure $C(\rho,\E)$ for a state $\rho$ in $\Sc$
is defined as the block coherence of the embedded state $\Ec[\rho]=\rho\ot\proj{1}$ 
with respect to a canonical Naimark extension $\P$ of the POVM $\E$, namely
\begin{align}\label{cohmeasure}
C(\rho,\E)\defeq C(\Ec[\rho],\P),
\end{align}
where the constraint in Eq.~(\ref{eq:canonaim}) has to hold. ---It is straightforward to generalize this definition also to the most general Naimark extension from Eq.~(\ref{eq:naimark}).
\end{definition}
Here, unitarily-invariant means that $C(\rho',\P)=C(U\rho' U\ad,U\P U\ad)$ holds for all unitaries $U$ on $\H'$. This property ensures that $C(\rho,\E)$ is well-defined, as it is consequently invariant under a change of measurement operators $A_i\to U_iA_i$, with unitary $U_i$~\cite{supp}.

In this letter, we focus on the relative entropy measure. It has the crucial property that in this case the POVM-based coherence measure is independent of the choice of Naimark extension (regarding the dimension and the form) for its definition~\cite{supp}.

\begin{lemma}[Analytical form of a POVM-based coherence measure]\label{lem:pbmeas}
The relative entropy of POVM-based coherence $C_{\rel}(\rho,\E)$ does not depend on the choice of Naimark extension for its definition and admits the following form,
\begin{align}
C_{\rel}(\rho,\E)=H(\{p_i(\rho)\})+\sum_ip_i(\rho)S(\rho_i)-S(\rho),
\end{align}
with $p_i(\rho)=\tr[E_i\rho]$, $\rho_i=\frac1{p_i}A_i\rho A_i^\dagger$, and the Shannon entropy $H(\{p_i(\rho)\})=-\sum_ip_i\log p_i$. In the special case of $\E$ being a von Neumann measurement, i.e., $E_i=\proj{i}$, $C_{\rel}(\rho,\E)$ equals the standard relative entropy of coherence.
\end{lemma}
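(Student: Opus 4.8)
The plan is to compute the relative entropy of block coherence, Eq.~(\ref{relentblkcoherence}), for the embedded state $\Ec[\rho]=\rho\ot\proj{1}$ with respect to a canonical Naimark extension $\P=\{P_i\}$ with $P_i=V\ad(\1\ot\proj{i})V$, and then separately argue independence of the Naimark extension. Using the formula $C_{\rel}(\rho',\P)=S(\Delta[\rho'])-S(\rho')$, the second term is immediate: $S(\Ec[\rho])=S(\rho\ot\proj{1})=S(\rho)$ since tensoring with a pure state does not change the entropy. So the whole task reduces to evaluating $S(\Delta[\Ec[\rho]])$, where $\Delta[\cdot]=\sum_iP_i(\cdot)P_i$ is the block-dephasing map of $\P$.

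First I would push the computation into the ``rotated'' picture: since the von Neumann entropy is unitarily invariant, $S(\Delta[\Ec[\rho]])=S\!\left(\sum_i(\1\ot\proj{i})\,V\Ec[\rho]V\ad\,(\1\ot\proj{i})\right)$. Write $\rho'\defeq V(\rho\ot\proj{1})V\ad$. The dephasing with respect to the orthogonal rank-$d$ projectors $\1\ot\proj{i}$ turns $\rho'$ into a block-diagonal state $\bigoplus_i \Pi_i\rho'\Pi_i$ with $\Pi_i=\1\ot\proj{i}$, so its entropy splits as $S(\sum_i\Pi_i\rho'\Pi_i)=H(\{q_i\})+\sum_i q_i\, S(\tilde\rho_i)$, where $q_i=\tr[\Pi_i\rho']$ and $\tilde\rho_i=\frac1{q_i}\Pi_i\rho'\Pi_i$ is the normalized block. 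By Eq.~(\ref{eq:canonaim}), $q_i=\tr[(\1\ot\proj{i})\rho']=\tr[E_i\rho]=p_i(\rho)$, which gives the Shannon term. For the blocks, I would identify $\Pi_i\rho'\Pi_i$ on the support $\H\ot\ket{i}$ with an operator on $\H$ and relate it to the measurement operators: writing $V(\rho\ot\ket{1})$ and projecting onto $\ket{i}$ on the probe produces (up to the unitary freedom $U_i$) exactly $A_i\rho^{1/2}$, equivalently $A_i\rho A_i\ad$ after tracing out; hence $\tilde\rho_i$ is unitarily equivalent to $\rho_i=\frac1{p_i}A_i\rho A_i\ad$ and $S(\tilde\rho_i)=S(\rho_i)$. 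Collecting the three pieces yields $C_{\rel}(\rho,\E)=H(\{p_i(\rho)\})+\sum_ip_i(\rho)S(\rho_i)-S(\rho)$.

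For the claim that the result is independent of the choice of Naimark extension, I would argue as follows. Any two Naimark extensions (including the general direct-sum form of Eq.~(\ref{eq:naimark})) are related, on the relevant subspace, by an isometry, and can moreover differ only by adding ``vacuum'' blocks on which $\Ec[\rho]$ has no support; neither operation changes $S(\Delta[\Ec[\rho]])$ or $S(\Ec[\rho])$ because entropy is invariant under isometries and unaffected by zero blocks. Alternatively, and more cleanly, the analytical formula just derived is manifestly expressed only in terms of intrinsic data of the POVM and the state — the probabilities $p_i(\rho)$ and the post-measurement states $\rho_i$ — and is itself independent of the choice of measurement operators $A_i$ up to the unitary freedom $A_i\to U_iA_i$ (which leaves $\rho_i$ fixed up to a unitary, hence leaves $S(\rho_i)$ fixed). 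Since every Naimark extension reproduces the same $p_i(\rho)$ and the same family of post-measurement states, all of them give the same value.

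Finally, the special case: if $\E$ is a rank-$1$ von Neumann measurement with $E_i=\proj{i}$, one may take $A_i=\proj{i}$, so $\rho_i=\proj{i}$ has $S(\rho_i)=0$, and the formula collapses to $H(\{\bra{i}\rho\ket{i}\})-S(\rho)=S(\Delta[\rho])-S(\rho)$, the standard relative entropy of coherence.

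\medskip
\noindent\textbf{Anticipated main obstacle.} The routine parts (entropy of a product, block-diagonal decomposition, matching $q_i=p_i$) are straightforward. The step that needs care is the precise identification of the normalized block $\tilde\rho_i=\frac1{p_i}(\1\ot\proj{i})\,V(\rho\ot\proj{1})V\ad\,(\1\ot\proj{i})$ with the post-measurement state $\rho_i=\frac1{p_i}A_i\rho A_i\ad$ up to a unitary: one must check that the operator $\bra{i}V\ket{1}$ on $\H$ is a legitimate choice of measurement operator $A_i$ (i.e.\ that $(\bra{i}V\ket{1})\ad(\bra{i}V\ket{1})=E_i$ follows from Eq.~(\ref{eq:canonaim})) and then invoke unitary invariance of $S$ to discard the residual unitary freedom. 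Tracking the probe-register bookkeeping and the unitary ambiguities carefully is where the real content of the proof lies; everything else is assembly.
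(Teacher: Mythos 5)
Your derivation of the analytical form is essentially the paper's: rotate by $V$ using unitary invariance of the entropy, note $S(\rho\ot\proj{1})=S(\rho)$, decompose the entropy of the block-diagonal dephased state as $H(\{p_i\})+\sum_ip_iS(\tilde\rho_i)$ (the paper invokes the joint entropy theorem for $\sum_ip_i\rho_i\ot\proj{i}$), and identify the blocks with $A_i\rho A_i\ad$ via $(\1\ot\bra{i})V(\1\ot\ket{1})=A_i$. The special case is also handled identically. Where you diverge is the independence claim, and here your treatment is the weak point. Your first argument ("any two Naimark extensions are related, on the relevant subspace, by an isometry") simply asserts the fact that constitutes the bulk of the paper's proof: the paper establishes, via the singular value decompositions of $P_i\Pie$ and $\tilde P_i\Pie$ (both of which square to $E_i\op0$ on $\He$), the existence of a single block-diagonal unitary $Q$ with $P_i\Pie=Q\tilde P_i\Pie$, whence $\Delta[\rho\op0]=Q\tilde\Delta[\rho\op0]Q\ad$ and the entropies agree. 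Your second, "cleaner" argument is a genuinely different and viable route, but as stated it has a small gap: the formula was derived only for the canonical extension, so to conclude that a \emph{general} direct-sum extension $\P$ gives the same value you must still show that each block $P_i(\rho\op0)P_i$ is isospectral to $A_i\rho A_i\ad$. This does follow — writing $M_i=P_i\Pie$ one has $M_i\ad M_i=E_i\op0$, and $M_i(\rho\op0)M_i\ad$ has the same nonzero spectrum as $(\rho\op0)^{1/2}M_i\ad M_i(\rho\op0)^{1/2}$, hence as $\sqrt{E_i}\rho\sqrt{E_i}$ — but this $XY\sim YX$ step is the actual content of the independence proof and needs to be said; "every Naimark extension reproduces the same family of post-measurement states" is not self-evident since the Naimark post-measurement states live in the larger space. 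With that observation supplied, your route is arguably more economical than the paper's explicit construction of $Q$ (which the paper reuses later for Theorem~\ref{thm:povmmio}).
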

The independence property holds because the eigenvalues of $\Delta[\Ec[\rho]]$  are the same for any two Naimark extensions used to define $\Delta$ and because the von Neumann entropy is a function solely of the eigenvalues of its argument~\cite{supp}.

\paragraph*{Minimal and maximal POVM-based coherence---}
We show in~\cite{supp} that for an $n$-outcome POVM $\E$ the bounds $0\leq C_{\rel}(\rho,\E) \leq \log n$ hold, see also~\cite{chitambar2016assisted}.
However, there exist POVMs for which one or both of these bounds cannot be attained for any quantum state. First, let us discuss the upper bound, i.e., maximal coherence: the convexity of $C_{\rel}$ implies that its maxima are attained by the pure states that lead to the highest entropy of measurement outcomes. 

Now, we address the lower bound. We can can characterize POVM-incoherent states (i.e., states with zero POVM coherence) via the POVM elements as follows.

\begin{lemma}[Characterization of POVM-incoherent states]\label{lem:sysincoherent}
A state $\rho$ is POVM-incoherent with respect to a POVM
$\E=\{E_i\}_{i=1}^{n}$  iff
 the following holds: 
\begin{align}
E_i \rho E_j=0 \quad \forall\ i\neq j\in\{1,\dotsc,n\} \ .
\end{align}
\end{lemma}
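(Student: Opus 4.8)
The plan is to prove the ``iff'' by a chain of equivalences that passes into the Naimark space and back. Since the relative entropy of block coherence is faithful and $C_{\rel}(\rho,\E)$ does not depend on the chosen Naimark extension (Lemma~\ref{lem:pbmeas}), a state $\rho$ is POVM-incoherent exactly when the embedded state $\Ec[\rho]=\rho\ot\proj1$ is block-incoherent with respect to a canonical Naimark extension $\P=\{P_i\}$ of $\E$, i.e.\ $\Delta[\Ec[\rho]]=\Ec[\rho]$. Because $\P$ is a complete projective measurement, $\sum_iP_i=\1$, this fixed-point condition holds iff all off-diagonal blocks vanish, $P_i\,\Ec[\rho]\,P_j=0$ for all $i\neq j$.

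Next I would make this condition explicit in terms of measurement operators. Let $W$ be the isometry $W\ket\psi\defeq V(\ket\psi\ot\ket1)$, where $V$ is the Naimark unitary and $P_i=V\ad(\1\ot\proj i)V$; then $A_i\defeq(\1\ot\bra i)W$ satisfies $A_i\ad A_i=W\ad(\1\ot\proj i)W=E_i$ by Eq.~(\ref{eq:canonaim}), so the $A_i$ form a set of measurement operators of $\E$, and $W=\sum_i(\1\ot\ket i)A_i$. Consequently $W\rho W\ad=\sum_{ij}(A_i\rho A_j\ad)\ot\ketbra ij$ and $P_i\,\Ec[\rho]\,P_j=V\ad\big((A_i\rho A_j\ad)\ot\ketbra ij\big)V$, which for $i\neq j$ vanishes iff $A_i\rho A_j\ad=0$, as $V$ is unitary. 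By the polar decomposition $A_i=U_i\sqrt{E_i}$ with $U_i$ unitary, this is in turn equivalent to $\sqrt{E_i}\,\rho\,\sqrt{E_j}=0$; in passing this reconfirms invariance under the relabelling $A_i\to U_iA_i$.

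It then remains to show that, for $i\neq j$, $\sqrt{E_i}\,\rho\,\sqrt{E_j}=0\Leftrightarrow E_i\rho E_j=0$, which I expect to be the only step requiring genuine care. The forward direction is immediate by sandwiching with $\sqrt{E_i}$ and $\sqrt{E_j}$. For the converse, observe that $E_i\rho E_j=0$ is equivalent to the statement that $\rho$ maps the range of $E_j$ into $\ker E_i$; since a positive semidefinite operator and its square root share the same range and kernel, $\supp E_k=\supp\sqrt{E_k}$ and $\ker E_k=\ker\sqrt{E_k}$, this is identical to the statement that $\rho$ maps the range of $\sqrt{E_j}$ into $\ker\sqrt{E_i}$, i.e.\ $\sqrt{E_i}\,\rho\,\sqrt{E_j}=0$. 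Composing all the equivalences yields the lemma; apart from this observation, the remaining work is routine bookkeeping in the canonical Naimark picture.
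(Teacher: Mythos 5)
Your proof is correct and follows essentially the same route as the paper's: pass to the canonical Naimark extension, reduce block-incoherence of $\Ec[\rho]$ to $A_i\rho A_j\ad=0$ for $i\neq j$, and then establish the equivalence with $E_i\rho E_j=0$. The only (cosmetic) difference is in that last equivalence, where you argue via the shared range and kernel of $E_k$ and $\sqrt{E_k}$, while the paper multiplies by Moore--Penrose inverses to project onto those same supports; the two arguments are interchangeable.
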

This condition is readily obtained from the definition of POVM-incoherence for the canonical Naimark extension. It generalizes the requirement of vanishing off-diagonal elements for incoherent states in standard coherence theory. In contrast to standard coherence theory states of the form $\rho=\sum_i \sqrt{E_i}\sigma \sqrt{E_i}$ are not necessarily incoherent, due to the possible non-orthogonality of the measurement operators $\sqrt{E_i}$. 
Indeed, for many interesting POVMs the set of incoherent states $\Ic_\povm$ is 
empty, since $C_{\rel}(\rho,\E)>0$, e.g., for rank-1 POVMs with no effect that is a projector~\cite{massar2007uncertainty} like the trine POVM which we discuss in detail below. The set $\Ic_\povm$ may be empty, but the states with {\em minimal} POVM-based coherence nevertheless form a convex set $\Mc$, due to the convexity of $C_{\rel}(\rho,\E)$ in $\rho$~\cite{aberg2006quantifying,vedral1998entanglement}. Interestingly, the maximally mixed state $\rho=\frac\1d$ is not necessarily contained in $\Mc$ for any POVM.

\paragraph*{POVM-incoherent operations---}
The final main ingredient of our resource theory are quantum operations that cannot increase POVM-based coherence, i.e., free operations. They exist for any POVM, ensuring a nontrivial resource theory. The second requirement in the following definition ensures that the operation is trace-preserving.
Throughout this paper, maps acting on the larger space $\Sc'$ are denoted as $\Lambda'$, while maps acting on the original system $\Sc$ will be called $\Lambda$.

\begin{definition}[POVM-incoherent (free) operations]\label{def:povmmio}
Let $\E$ be a POVM and $\P$ any Naimark extension of it. Let $\Lambda'$ 
be a completely positive trace-preserving map on $\Sc'$ that is
\begin{enumerate}[label=\roman{*}.]
\item \emph{Block-incoherent:}
$\Lambda'$ is maximally block-incoherent (BIC) with respect to $\P$,
see Eq.~(\ref{mio}).
\item \emph{Subspace-preserving:}
$\Lambda'[\Se]\subseteq \Se$ for the subset $\Se\subseteq \Sc'$ of embedded system states.
\end{enumerate}
We call the restricted channel $\Lambda'|_{\Se}$ an embedded POVM-incoherent operation, and $\Lambda_\pic\defeq\Ec^{-1}\circ\Lambda'\circ\Ec$ a POVM-incoherent operation.
\end{definition}

This definition of POVM-incoherent operations implies that they cannot increase the POVM-based coherence of any state.

\begin{lemma}[Operations as defined in Def.~\ref{def:povmmio}
can indeed not increase POVM-based coherence]\label{lem:povmmon}
Let $\Lambda_\pic$ be a POVM-incoherent operation of the POVM $\E$. Then, for every POVM-based coherence measure $C(\rho,\E)$ it holds that 
\begin{align}
C(\Lambda_\pic[\rho],\E)\leq C(\rho,\E).
\end{align}
\end{lemma}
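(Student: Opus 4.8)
The plan is to reduce the statement to the already-established monotonicity of the block-coherence measure $C(\cdot,\P)$ under BIC operations, by carefully chasing the definitions through the Naimark embedding. First I would unpack the left-hand side: by Definition~\ref{def:measure}, $C(\Lambda_\pic[\rho],\E) = C(\Ec[\Lambda_\pic[\rho]],\P)$. Now $\Lambda_\pic = \Ec^{-1}\circ\Lambda'\circ\Ec$ by Definition~\ref{def:povmmio}, so $\Ec[\Lambda_\pic[\rho]] = \Ec\circ\Ec^{-1}\circ\Lambda'\circ\Ec[\rho] = \Lambda'[\Ec[\rho]]$, where the cancellation $\Ec\circ\Ec^{-1} = \id$ on $\Se$ is legitimate precisely because $\Lambda'$ is subspace-preserving (property ii), i.e. $\Lambda'[\Ec[\rho]]\in\Se$, so that $\Ec^{-1}$ is applied only where it is well-defined and $\Ec$ inverts it. Hence $C(\Lambda_\pic[\rho],\E) = C(\Lambda'[\Ec[\rho]],\P)$.

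The second step is to invoke the monotonicity of the block-coherence measure. Since $\Lambda'$ is a maximally block-incoherent (BIC) operation with respect to $\P$ (property i of Definition~\ref{def:povmmio}), the defining monotonicity property~(ii) of a block-coherence measure gives
\begin{align}
C(\Lambda'[\Ec[\rho]],\P)\leq C(\Ec[\rho],\P).
\end{align}
Finally, by Definition~\ref{def:measure} again, $C(\Ec[\rho],\P) = C(\rho,\E)$, so chaining the equalities and the inequality yields $C(\Lambda_\pic[\rho],\E)\leq C(\rho,\E)$, which is the claim. The argument is uniform over all POVM-based coherence measures because it uses only the two axioms (faithfulness is not even needed here) together with unitary invariance, which was already required to make $C(\rho,\E)$ well-defined.

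The main subtlety — and the step I would be most careful about — is the first one: making the cancellation $\Ec\circ\Ec^{-1}$ rigorous. The map $\Ec:\rho\mapsto\rho\ot\proj{1}$ is a bijection from $\Sc$ onto $\Se$, so $\Ec^{-1}$ is only defined on $\Se$; the whole construction of $\Lambda_\pic$ relies on $\Lambda'[\Se]\subseteq\Se$, and I would state explicitly that this is exactly what property~ii buys us, so that $\Lambda'\circ\Ec$ indeed lands in $\Se$ and the composition $\Ec^{-1}\circ\Lambda'\circ\Ec$ is a well-defined channel on $\Sc$, with $\Ec\circ\Ec^{-1}|_\Se = \id_\Se$. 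A secondary point worth a remark is that the statement is independent of which Naimark extension is used to define things — for the relative entropy measure this is guaranteed by Lemma~\ref{lem:pbmeas}, and more generally Definition~\ref{def:povmmio} is phrased for "any Naimark extension $\P$", so one should note that $\Lambda'$ and the measure must be referred to the same $\P$ throughout, which the composition respects. Everything else is a direct substitution, so no genuine calculation is required.
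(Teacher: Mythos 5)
Your proof is correct and follows essentially the same route as the paper's: rewrite $\Ec[\Lambda_\pic[\rho]]=\Lambda'[\Ec[\rho]]$ (which the paper states as $\Lambda_\pic[\rho]\oplus 0=\Lambda'_\bic[\rho\oplus 0]$, justified by subspace preservation exactly as you argue) and then apply the monotonicity axiom of the block-coherence measure under the BIC channel $\Lambda'$. Your extra care about where $\Ec^{-1}$ is defined is a welcome clarification but does not change the argument.
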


For any measurement, we can characterize the set of POVM-incoherent operations by a semidefinite program (SDP), since these operations are defined solely by linear 
conditions (\romannumeral1, \romannumeral2\ and trace-preservation) and semidefinite conditions (complete positivity).

\begin{theorem}[Characterization of POVM-incoherent operations]\label{thm:povmmio}
The set of POVM-incoherent operations is independent of the chosen Naimark extension and can be characterized by a semidefinite feasibility problem (SDP). In the 
case of von Neumann measurements, POVM-incoherent operations are equivalent 
to MIO maps of the standard coherence theory. 
\end{theorem}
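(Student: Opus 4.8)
The plan is to prove the three assertions of Theorem~\ref{thm:povmmio} in order: independence of the Naimark extension, the SDP characterization, and the reduction to MIO in the von Neumann case.

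First, for \emph{independence of the Naimark extension}, I would argue that any two Naimark extensions $\P^{(1)}$ on $\H'_1$ and $\P^{(2)}$ on $\H'_2$ of the same POVM $\E$ are related by an isometry that acts trivially on the embedded subspace. Concretely, after padding the smaller space with extra dimensions (which only enlarges $\Ic$ by adding more blocks outside the relevant subspace), one can find a unitary $W$ on the common Naimark space with $W(\rho\ot\proj1)W\ad = \rho\ot\proj1$ for all $\rho\in\Sc$, i.e.\ $W$ fixes $\Se$ pointwise, and $W P^{(1)}_i W\ad = P^{(2)}_i$. Conjugation by $W$ then gives a bijection $\Lambda' \mapsto W\Lambda' W\ad$ between maps that are BIC w.r.t.\ $\P^{(1)}$ and maps that are BIC w.r.t.\ $\P^{(2)}$, and since $W$ fixes $\Se$ this bijection preserves the subspace-preserving property and the induced map $\Lambda_\pic=\Ec^{-1}\circ\Lambda'\circ\Ec$ on $\Sc$. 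Hence the set $\{\Lambda_\pic\}$ is the same regardless of the chosen extension. I expect this is the step that needs the most care: one must verify that ``padding'' a Naimark extension by extra dimensions does not change the induced operations, and that such a $W$ exists for arbitrary (not just canonical) extensions — this is essentially the standard uniqueness-up-to-isometry statement for Naimark dilations, which I would invoke and adapt.

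Second, for the \emph{SDP characterization}, I would pass to the Choi representation. Fix the canonical extension $\P$ on $\H'=\H\ot\Cb^n$. A CPTP map $\Lambda'$ on $\Sc'$ corresponds to a Choi matrix $J(\Lambda')\ge 0$ with $\tr_{\text{out}} J(\Lambda') = \1$ (complete positivity and trace preservation). Condition~(\romannumeral1), $\Lambda'\circ\Delta = \Delta\circ\Lambda'\circ\Delta$, is linear in $\Lambda'$ and hence a linear constraint on $J(\Lambda')$; equivalently one can impose $\Delta_{\text{out}}\bigl[(\Delta_{\text{in}}\ot\id)(J)\bigr] = (\Delta_{\text{in}}\ot\id)(J)$ up to the appropriate transpose, using that $\Delta$ is itself a fixed completely positive projection. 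Condition~(\romannumeral2), $\Lambda'[\Se]\subseteq\Se$, says that for every $\rho$, the output $\Lambda'[\rho\ot\proj1]$ is again of product form $\sigma\ot\proj1$; since $\Se$ is an affine (in fact linear, after dropping normalization) subspace of Hermitian operators, ``output lies in $\lspan\Se$'' is a linear constraint on $J(\Lambda')$, and one then checks trace preservation separately forces the correct normalization so that $\sigma$ is a state whenever $\rho$ is. Collecting these: $J\ge 0$ (semidefinite), together with a finite list of linear equality constraints, is by definition a semidefinite feasibility problem.

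Third, for the \emph{von Neumann case}, take $E_i=\proj i$ on $\H=\Cb^n$ ($d=n$). Then the canonical Naimark extension can be taken so that $\Se$ is already an invariant, block-incoherent ``corner'' of $\Sc'$: the dephasing $\Delta$ on $\H'$ restricts on $\Se$ to the standard dephasing on $\Sc$. One direction: given a standard-coherence MIO map $\Lambda$ on $\Sc$, build $\Lambda' = \Lambda \ot \id_{\proj1}$ (extended suitably off $\Se$, e.g.\ by composing with $\Delta$ on the complement) — this is BIC w.r.t.\ $\P$ and subspace-preserving, and induces $\Lambda_\pic=\Lambda$. Conversely, any POVM-incoherent $\Lambda_\pic=\Ec\inv\circ\Lambda'\circ\Ec$ inherits from condition~(\romannumeral1) restricted to $\Se$ exactly the relation $\Lambda_\pic\circ\Delta=\Delta\circ\Lambda_\pic\circ\Delta$ on $\Sc$, which is the defining property of MIO. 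The only subtlety is bookkeeping the identification $\Ec[\Delta[\rho]] = \Delta_{\H'}[\Ec[\rho]]$ for the rank-1 case, which follows directly from Eq.~(\ref{eq:canonaim}) with $P_i$ rank $d=n$ — and here I would cite Lemma~\ref{lem:sysincoherent} to confirm $\Ic_\povm$ coincides with the standard incoherent states.
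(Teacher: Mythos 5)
Your second and third parts are fine and essentially match the paper: the SDP characterization is obtained exactly as you describe, by passing to the Choi/process matrix and observing that BIC, subspace preservation, trace preservation and complete positivity are all linear or semidefinite constraints; and the von Neumann case follows (as in the paper) by choosing the trivial Naimark extension $\H'\simeq\H$, $\P=\E$, for which subspace preservation is vacuous and BIC reduces to MIO, and then invoking extension independence. The gap is in your first part. You assert that, after padding, there is a unitary $W$ on the common Naimark space that fixes $\Se$ pointwise and satisfies $W P^{(1)}_i W\ad = P^{(2)}_i$. Such a $W$ does not exist in general: non-minimal Naimark dilations are \emph{not} unitarily equivalent, and the projectors of two extensions of the same POVM can have different ranks, so no unitary conjugation can map one measurement onto the other. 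Concretely, for the trine POVM the minimal extension has $\rank P_i=1$ on $\Cb^3$, while the canonical extension has $\rank P_i=2$ on $\Cb^2\ot\Cb^3$; moreover, after zero-padding, the smaller extension's effects no longer sum to the identity on the larger space, so they do not even form a complete projective measurement there. The ``uniqueness up to isometry'' you invoke holds only for minimal dilations, and even then the isometry need not fix $\Se$ pointwise while simultaneously intertwining the measurements.

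What actually holds, and what the paper proves (in the proof of Lemma~1, via singular value decompositions of $P_i\Pie$ and $\tilde P_i\Pie$), is the weaker relation $P_i\Pie = Q\,\tilde P_i\Pie$ for a block-diagonal unitary $Q$ with $Q\Pie=\Pie$: the two extensions agree up to a unitary only \emph{when acting on the embedded subspace}. From this one gets $\Delta\circ\Ec=\Qc\circ\tilde\Delta\circ\Ec$ and $\Qc\circ\Ec=\Ec$, but one still has to show that every channel that is BIC and subspace-preserving for $\P$ induces the same system map as some channel that is BIC and subspace-preserving for $\tilde\P$. The paper does this by parameterizing BIC channels as $\Lambda'_\P=\Lambda'+\Delta\circ\Lambda'\circ\Delta-\Lambda'\circ\Delta$, pushing the identities above through the composition $\Ec\ad\circ\Lambda'_\P\circ\Ec$, and then repairing the fact that the conjugated channel may map outside the smaller space $\tilde\Sc'$ by composing with a reversal channel of the isometry $Q|_{\tilde\H'}$. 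None of this is automatic from your conjugation argument, and without it the claimed bijection between the two sets of free operations is not established.
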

The independence property holds because for POVM-incoherent channels $\Lambda_\pic$ only the action of $\Lambda'$ on $\Se$ is relevant, which can be shown to be equal for any two Naimark extension, as these are connected by an isometry when acting on embedded system states~\cite{supp}. 

Regarding the interconversion of resource states in our POVM-based coherence theory,
we can employ the SDP characterization of POVM-incoherent operations $\Lambda_\pic$ for a POVM $\E$ to determine numerically the maximally achievable fidelity $F_{\max}(\rho,\sigma)=\max_{\Lambda_\pic}F(\Lambda_\pic[\rho],\sigma)$ between a target state $\sigma$ and $\Lambda_\pic[\rho]$, see the Supplemental Material~\cite{supp}.

\paragraph*{Example: qubit trine POVM---}

As an example, we analyze the case of the qubit trine POVM $\E^{\operatorname{trine}}=\{\frac23\proj{\phi_i}\}_{i=1}^3$, with measurement directions $\ket{\phi_i}=1/\sqrt{2}(\ket{0}+\omega^{i-1}\ket{1})$, where $\omega=\operatorname{exp}(2\pi i/3)$.
The corresponding POVM-based coherence of \emph{pure} states is illustrated in Fig.~\ref{fig:trineplot} (left).
For the qubit trine POVM  there are two states with maximal POVM-coherence $C_{\rel}^{\max}=\log3$, namely $\ket{\Psi_{\operatorname m}}\in\{\ket{0},\ket{1}\}$.
The set $\Mc$ of states with minimal POVM-based coherence $C_{\rel}^{\min}=\log3-1$
contains solely the maximally mixed state, i.e., $\Mc=\{\frac{\1}2\}$. 

\begin{figure}[!h]
\centering
\includegraphics[width=0.5\columnwidth,valign=B]{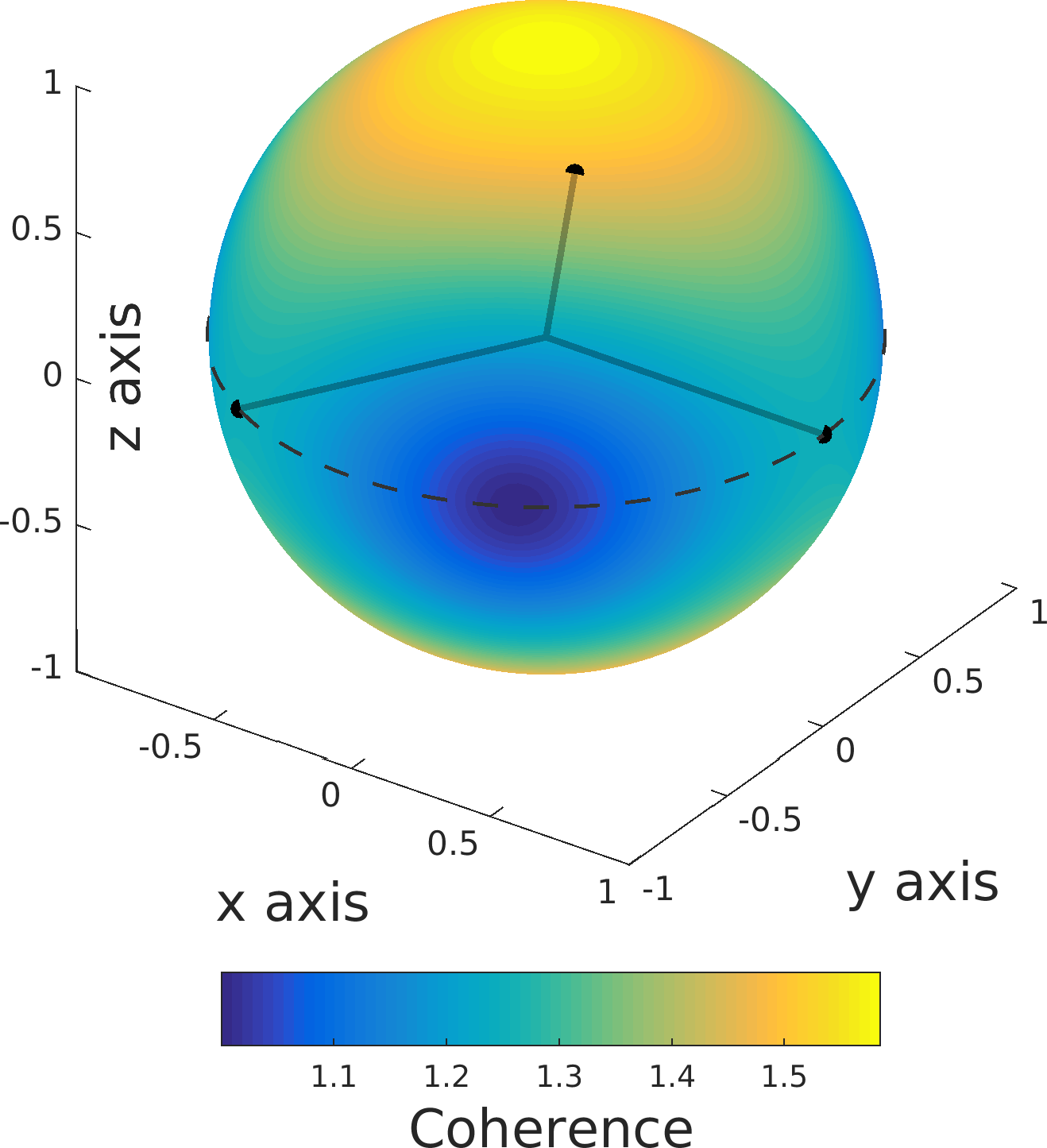}
\includegraphics[width=0.5\columnwidth,valign=B]{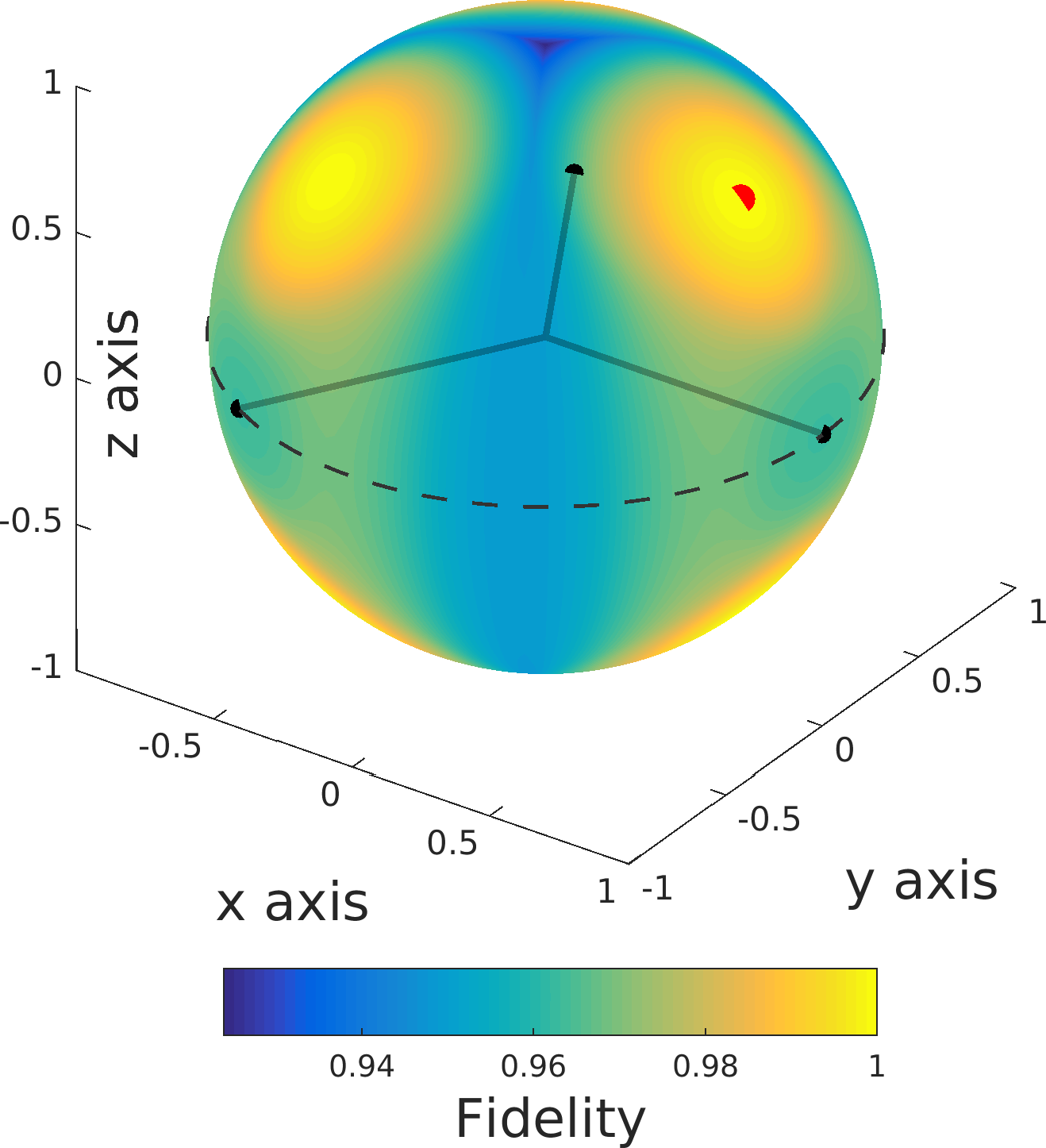}
\caption{\label{fig:trineplot}
POVM-based coherence theory for qubit states with respect to the trine POVM 
$\E^{\operatorname{trine}}$, in the Bloch sphere representation. Gray lines indicate the three measurement directions (see main text for details). 
\emph{Left:}  POVM-based coherence of pure qubits (surface of sphere). 
The states $\ket{0}$ and $\ket{1}$ have maximal POVM-based coherence $C=\log3$. The Bloch vectors of the three  states with the lowest pure-state coherence $C=1$ are antipodal to the measurement directions. 
\emph{Right:} Maximally achievable conversion fidelity 
$F_{\max}(\rho,\sigma)=\max_{\Lambda_\pic}F(\Lambda_\pic[\rho],\sigma)$ between 
an initial state $\rho$ (red dot) subjected to POVM-incoherent operations $\Lambda_\pic$ and a target state $\sigma$  on the sphere surface. Here, $\rho=\proj{\psi}$ with $\ket{\psi}=\cos(\frac\pi8)\ket{0}+\sin(\frac\pi8)\ket{1}$. Only states in the orbit of $\ket{\psi}$ under the six POVM-incoherent unitaries can be reached with unit fidelity, as depicted by the yellow spots.}
\end{figure}

Regarding POVM-incoherent (free) operations, the free \emph{unitary} operations can be fully characterized: there exist exactly six POVM-incoherent unitaries  $U_i^{\operatorname{trine}}$. They correspond to the rotations on the Bloch sphere that map the trine star to itself, i.e., the symmetry group of the equilateral triangle. 
In standard coherence theory the measurement map $\rho\to\Delta[\rho]$ is incoherent. However, for a general POVM the measurement map $\rho\to\sum_i\sqrt{E_i}\rho\sqrt{E_i}$ is not necessarily POVM-incoherent with respect to $\E$ as one can find POVMs for which the map increases the coherence of a state. Notably, for the trine POVM $\E^{\operatorname{trine}}$ the SDP verifies that the measurement map is indeed POVM-incoherent.
As to  conversion properties, every qubit state $\rho$ can be obtained deterministically by applying some POVM-incoherent operation to a maximally coherent state 
$\ket{\Psi_{\operatorname m}}\in\{\ket{0},\ket{1}\}$.
By applying the SDP, we have numerical evidence that given a state $\ket{\psi}\neq\ket{\Psi_{\operatorname m}}$, the only pure states that can be obtained from it with certainty via POVM-incoherent operations are in the orbit $\{U_i^{\operatorname{trine}}\ket{\psi}\}$ under the six trine-incoherent unitaries. An example for the conversion fidelity when starting from an initial state with less than maximal resource is shown in Fig.~\ref{fig:trineplot} (right).

\paragraph*{Conclusion and Outlook---}
We have introduced a familiy of resource theories which quantify the coherence of a
quantum state with respect to a given arbitrary POVM. To define these resource theories, we have embedded the states and operations into a higher-dimensional Naimark space. There, the POVM can be extended to a projective measurement for which a resource theory of block coherence exists~\cite{aberg2006quantifying}.
The restriction to the embedded original space led to the characterization of free states, free operations and resulting conversion properties within the POVM-based resource theories. As a POVM-based coherence measure we have studied the relative-entropy-based measure which is invariant under the choice of Naimark extension. For the case of von Neumann measurements, POVM-based coherence measures and POVM-incoherent operations reduce to their counterparts in standard coherence theory. 

The canonical Naimark extension is realized by coupling the state to a probe, performing a global unitary and measuring the probe. One can view the probe as a measurement apparatus: then, the POVM-based coherence measure quantifies the global-state coherence generated by the unitary on the state with respect to a fixed basis of the apparatus. Thus, our measure can be interpreted as the resource that is necessary to implement the POVM on a state via the Naimark extension.
Note that in general a part of this coherence is used to generate classical randomness for the mixing of the POVM. If the experimenter is able to perform statistical mixtures of measurements, certain POVMs can be implemented with less resource.
Also note other works that elucidate the role of quantum resources in the Naimark extension. In~\cite{streltsov2011linking} it was shown that if a composite system carries discord, any local von Neumann measurement necessarily creates entanglement between the measurement apparatus and the system. Ref.~\cite{jozsa2003entanglement} discusses the minimal amount of average entanglement contained in the canonical Naimark effects $P_i$ of a rank-1 POVM.

Several open questions should be addressed in the future. First, it is not clear whether a characterization of POVM-incoherent operations without reference to the Naimark space is possible. A necessary condition is given by $\Lambda_\pic[\Mc]\subseteq \Mc$, where $\Mc$ is the set of states with minimal POVM-based coherence. For projective measurements, this property is also sufficient, while the qubit trine POVM is a counterexample for this property to be sufficient in general: there, the state with minimal coherence is proportional to the identity, which is left invariant by all unital maps. However, almost all unitary channels can increase the POVM-based coherence~\cite{supp}. 
A further interesting problem is whether operational interpretations of regular coherence measures \cite{yuan2015intrinsic,biswas2017interferometric,napoli2016robustness} can be generalized to POVM-coherence, by e.g., generalizing the considered quantum information protocols. Finally, several generalizations of our framework are possible. One can study POVM-coherence measures beyond the one based on the relative entropy. Also, in analogy to regular coherence theory, one can introduce the POVM-coherence equivalents of the channel classes IO, SIO etc. \cite{streltsov2016quantum} which are subsets of the POVM-incoherent operations, and study the corresponding conversion properties.

We acknowledge financial support from the German Federal Ministry of Education and Research (BMBF).
FB gratefully acknowledges support from Evangelisches Studienwerk Villigst and from Strategischer Forschungsfonds (SFF) of the Heinrich Heine University D\"usseldorf. 

\bibliography{CohBib}{}
\bibliographystyle{apsrev}

\newpage
\part*{Supplemental Material}\label{supp}

In the following we provide the technical details and proofs that complement the main text. Moreover, we provide examples and plots to illustrate the general results.
In part \ref{app:naimark} we give a detailed description of the Naimark extension of a POVM. Subsequently, in part~\ref{app:blkcoh} we describe the resource theory of block coherence~\cite{aberg2006quantifying} in a way that is analogous to standard coherence theory.
The rest of the Supplemental Material is devoted to our resource theory of POVM-based coherence, which we formulate in the main text. In part~\ref{app:pbcohmeas}, we discuss general POVM-based coherence measures, including the relative entropy of POVM-based coherence on which we focus for the remainder of the paper. 
Achievable lower and upper bounds for the POVM-based coherence of quantum states are discussed in part~\ref{app:minmaxcoh}.
In part~\ref{app:povmicops} we prove properties of POVM-incoherent operations, i.e., free operations, and present explicitly the semidefinite program that characterizes them. Moreover, we study the conversion of resource states under free operations.
Finally, in part~\ref{app:qubitcoh} we exemplify all general results by means of the qubit trine POVM and provide analytical results of its POVM-based coherence theory.

\begin{table}[h]
\begin{centering}
\begin{tabular}{|c|l|}\hline
\textbf{Symbol} & \textbf{Explanation} \tabularnewline\hline
$\H$ & $d$-dimensional Hilbert space \tabularnewline\hline
$\H'$ & $d'$-dimensional (Naimark) Hilbert space \tabularnewline\hline
$\Sc$ & set of (system) quantum states on $\H$ \tabularnewline\hline
$\Sc'$ & set of quantum states on $\H'$ \tabularnewline\hline
$\Ec$ & embedding channel, from $\Sc$ to $\Sc'$ \tabularnewline\hline
$\He$ & subspace of $\H'$ of embedded state vectors \tabularnewline\hline
$\Pie$ & orthogonal projector onto $\He$ \tabularnewline\hline
$\Se$ & subset of $\Sc'$ of embedded system states \tabularnewline\hline
$\Omega$ & projector onto operators on $\He$ \tabularnewline\hline
$\E$ & POVM on $\H$ \tabularnewline\hline
$\{A_i\}$ & set of measurement operators of $\E$ \tabularnewline\hline
$V$ & Naimark interaction unitary on $\H'$ \tabularnewline\hline
$\P$ & Naimark extension of $\E$ on $\H'$ \tabularnewline\hline
$\Ic$ & set of block-incoherent states of $\P$ \tabularnewline\hline
$\Lambda'_\bic$ & block-incoherent operation on $\Sc'$ \tabularnewline\hline
$\Lambda'|_{\Se}$ & embedded POVM-incoherent operation \tabularnewline\hline
$\Lambda_\pic$ & POVM-incoherent operation on $\Sc$ \tabularnewline\hline
\end{tabular}
\par\end{centering}
\caption{Notation used throughout this work.}
\end{table}

\appendix

\section{POVM and Naimark extension}\label{app:naimark}

In this part, we provide the details and construction of the Naimark extension of a POVM $\E$. Under a general Naimark extension we understand any projective measurement $\P=\{P_i\}$ on $\H'$, which fulfills Eq.~(\ref{eq:naimark}). Consequently, the uppper left $d\times d$ block of the Naimark extension effect $P_i$ coincides with the POVM effect $E_i$, i.e., for $\Pie=\1_d\op0_{d'-d}$
it holds that 
\begin{align}\label{aux4}
E_i\op0_{d'-d}=\Pie P_i\Pie.
\end{align}
Therefore, it is convenient to embed system operators $X$ on $\H$ into the Naimark space $\H'$ via the embedding map $\Ec[X]\defeq X\op0$ and call $\He\defeq\{\ket{\psi}\op0:\ket{\psi}\in\H\}$ the embedded state space. Now, we only need to consider (embedded) operators on the Naimark space $\H'$. The construction of a projective measurement fulfilling Eq.~(\ref{aux4}) is straigthforward. We provide details for the \emph{canonical} Naimark extension, for which the Naimark space has product form $\H'=\H\ot\H_R$, with $\H_R$ being the probe's state space. In this case, we employ the embedding map $\Ec[X]\defeq X\ot\proj{1}$ and $\He=\H\ot\ket{1}$. The canonical Naimark extension is generally not of the smallest possible dimension $d'_{\min}=\sum_i\rank E_i$~\cite{chen2007ancilla}. In section~\ref{app:pbcohmeas} (\ref{app:povmicops}) we show that POVM-based coherence measures (POVM-incoherent operations) are independent of the choice of Naimark extension.

Let $\E=\{E_i\}_{i=1}^{n}$ be an $n$-outcome POVM on $\H$, and $\{A_i\}$ a set of measurement operators for $\E$. Any measurement operator can be written as 
$A_i=U_i\sqrt{E_i}$ for some unitary operator $U_i$. Let $\{\ket{i}\}$ be an orthonormal basis of the probe space $\H_R$ and define the operator 
\begin{align}
\tilde V = \sum_{i=1}^{n}A_i\ot\ketbra{i}{0},
\end{align}
which is an isometry from $\He$ to $\H'$, i.e., it fulfills $\tilde V\ad\tilde V=\1_d$ on $\He$ as a consequence of the normalization of the POVM.
The isometry $\tilde V$ can be extended to a unitary $V$ on $\H'$ by completing the set of orthonormal column vectors (lying in $\im\tilde V\subseteq\H'$) to an orthonormal basis, i.e., by filling up the columns of the $nd\times d$ matrix to an $nd\times nd$ matrix with orthonormal column vectors.
Now, we can parameterize the unitary by operators $A_{i,a}$ on $\H$ as
\begin{align}
V = \sum_{i,a=1}^{n}A_{i,a}\ot\ketbra{i}{a},
\end{align}
where $A_{i,1}=A_i$. To ensure the unitary condition $V\ad V=VV\ad=\1_{d'}$, these operators need to fulfill
\begin{align}
&\sum_{i}A_{i,a}\ad A_{i,b}=\delta_{a,b}\1_d,\quad\trm{and}\ncl
&\sum_{a}A_{i,a} A_{j,a}\ad=\delta_{i,j}\1_d.
\end{align}
Finally, the canonical Naimark extension $\P=\{P_i\}_{i=1}^{n}$ of $\E$ is defined as
\begin{align}
P_i &\defeq V\ad\1\ot\proj{i}V \ncl
&=\sum_{a,b} A_{i,a}\ad A_{i,b}\ot\ketbra{a}{b},
\end{align}
which is a rank-$d$ projective measurement, i.e., $\rank P_i=d$ and $P_iP_j=\delta_{i,j}P_j$. Projecting this measurement onto the embedded state space $\He$ with the projector $\Pie=\1\ot\proj{1}$ yields the embedding of the POVM $\E$, 
\begin{align}
\Pie P_i\Pie=A_{i,1}\ad A_{i,1}\ot\proj{1}=E_i\ot\proj{1}=\Ec[E_i].
\end{align}
This property implies that 
\begin{align}
\tr[E_i\rho]=\tr[P_i(\rho\ot\proj{1})],
\end{align}
and thus, up to an exchange of the two subsystems the Naimark extension property from Eq.~(\ref{eq:naimark}).

\section{Resource theory of block coherence}\label{app:blkcoh}

In this part, we supplement details of the resource theory of block coherence as described in the main text. This theory was introduced in~\cite{aberg2006quantifying} and we formulate it in a modern way that is analogous to the resource theory of coherence~\cite{streltsov2016quantum}. 

Let $\P=\{P_i\}_{i=1}^{n}$ be a projective measurement on $\H$. Resource-free states, called block-incoherent, are block-diagonal with respect to $\P$ and belong to the set 
\begin{align}
\Ic=\{\rho_{\inc}=\sum_iP_i\sigma P_i:\sigma\in\Sc\}.
\end{align}
Therefore, block-incoherent states are characterized as the image of the block-dephasing operator $\Delta[\sigma]=\sum_iP_i\sigma P_i$ applied to any state $\sigma$.
By the mutual orthogonality of the effects $P_i$, block-incoherent states are also characterized by the condition
\begin{align}
\rho\in\Ic \Leftrightarrow P_i\rho P_j=0 \ \ \forall i\neq j\in\{1,\dotsc,n\}.
\end{align}
Let $C(\rho,\P)$ be a block-coherence measure as defined in the main text. Any block-coherence measures obeys a further desirable property, which ensures that it does not depend on the choice of basis within the blocks (subspaces) $\pi_i\defeq\im P_i$.

\begin{proposition}\label{prop:buinv}
Every block-coherence measure $C(\rho,\P)$ as defined in the main text also fulfills
\begin{enumerate}[label=\roman{*}.]  \setcounter{enumi}{2}
\item \emph{Block-unitary invariance:}
$C(U\rho U\ad,\P)= C(\rho,\P)$ with $U=\op_iU_i$, and where $U_i$ is unitary on $\pi_i$.
\end{enumerate}
\end{proposition}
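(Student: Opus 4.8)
The plan is to show that conjugation by a block-unitary is itself a block-incoherent (BIC) operation whose inverse is also BIC, so that the monotonicity property, applied in both directions, forces equality.

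First I would record the elementary structural fact that, since $\P=\{P_i\}_{i=1}^n$ is a projective measurement, the blocks $\pi_i\defeq\im P_i$ form an orthogonal decomposition $\H=\op_i\pi_i$. Hence any $U=\op_iU_i$ with $U_i$ unitary on $\pi_i$ is block-diagonal with respect to this decomposition and therefore commutes with every projector, $UP_i=P_iU$ for all $i$. Note also that $U\ad=\op_iU_i\ad$ is again a block-unitary of the same type, since each $U_i\ad$ is unitary on $\pi_i$.

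Next I would introduce the unitary channel $\Lambda_U[\rho]\defeq U\rho U\ad$ and verify that it belongs to the BIC class. Using the characterization $\rho\in\Ic\Leftrightarrow P_i\rho P_j=0$ for $i\neq j$ (from Sec.~\ref{app:blkcoh}), for any $\rho\in\Ic$ one gets $P_i\Lambda_U[\rho]P_j=UP_i\rho P_jU\ad=0$ whenever $i\neq j$, so $\Lambda_U[\Ic]\subseteq\Ic$; since $\Lambda_U$ is trivially completely positive and trace-preserving, it is a block-incoherent operation in the sense of Eq.~(\ref{mio}). Applying the same argument to the block-unitary $U\ad$ shows that the inverse map $\Lambda_U^{-1}[\rho]=U\ad\rho U$ is likewise block-incoherent.

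Finally I would invoke monotonicity twice. On the one hand $C(U\rho U\ad,\P)=C(\Lambda_U[\rho],\P)\leq C(\rho,\P)$. On the other hand, writing $\rho=\Lambda_U^{-1}[\Lambda_U[\rho]]$ and using that $\Lambda_U^{-1}$ is block-incoherent gives $C(\rho,\P)\leq C(\Lambda_U[\rho],\P)=C(U\rho U\ad,\P)$. The two inequalities combine to $C(U\rho U\ad,\P)=C(\rho,\P)$, which is exactly block-unitary invariance. I do not expect a genuine obstacle; the one point needing a moment's care is checking that \emph{both} $\Lambda_U$ and $\Lambda_U^{-1}$ count as free operations, so that the monotonicity inequality may legitimately be run in both directions. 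Beyond that the argument is just the orthogonal block-decomposition bookkeeping, and the faithfulness property is not used.
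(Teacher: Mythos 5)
Your proposal is correct and follows essentially the same route as the paper: show that conjugation by the block-unitary commutes with the projectors, hence is a reversible BIC operation, and then run the monotonicity inequality in both directions. Your explicit check that $U\ad$ is also a block-unitary (so the inverse channel is likewise free) is the same point the paper makes by noting that unitary channels are invertible.
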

\begin{proof}
The assertion holds since $U$ is a reversible block-incoherent operation and thus the monotonicity property holds with equality. More precisely,
it holds that $P_iU=UP_i$ and therefore $U\Delta[\sigma]U\ad=\Delta[U\sigma U\ad]$. 
Because block-incoherent states have the form $\rho=\Delta[\sigma]$, we conclude that $U\rho U\ad\in \Ic$ for all states $\rho\in\Ic$. Hence, the unitary channel $\rho\to U\rho U\ad$ is a maximally-block-incoherent (BIC) operation.
Since unitary channels are invertible, we can apply the monotonicity property in both directions to obtain equality, from which the assertion follows.
\end{proof}

Several block-coherence quantifier were introduced in \cite{aberg2006quantifying}, and the monotonicity condition was proven for some of them. We consider a general class of a block-coherence quantifier that is obtained from a distance $D(\rho,\sigma)$ via
\begin{align}
C(\rho,\P) \defeq \inf_{\sigma\in\Ic}D(\rho,\sigma).
\end{align} 
Certain properties of the distance measure lead to the block-coherence measure properties.

\begin{proposition}\label{prop:dbcoh}
The distance-based block-coherence quantifier $C(\rho,\P) = \inf_{\sigma\in\Ic}D(\rho,\sigma)$ fulfills
\begin{enumerate} [label=\roman{*}.]  
\item \emph{Positivity and Faithfulness}, if the distance $D(\rho,\sigma)$ is nonnegative and vanishes if and only if $\rho=\sigma$.
\item \emph{Monotonicity}, if the distance is contractive under quantum operations $\Lambda$, that is,
$D(\Lambda[\rho],\Lambda[\sigma])\leq D(\rho,\sigma)$.
\end{enumerate}
\end{proposition}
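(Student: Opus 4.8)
The plan is the standard distance‑based argument used for coherence and entanglement monotones (cf.~\cite{vedral1998entanglement}), adapted to the block‑incoherent set $\Ic=\Delta[\Sc]$. For part (i), positivity of $C(\rho,\P)=\inf_{\sigma\in\Ic}D(\rho,\sigma)$ is immediate since $D\geq0$. Faithfulness splits into two directions. The ``$\Leftarrow$'' direction is trivial: if $\rho\in\Ic$, inserting $\sigma=\rho$ into the infimum gives $C(\rho,\P)\leq D(\rho,\rho)=0$, hence $C(\rho,\P)=0$. For the ``$\Rightarrow$'' direction I would first note that $\Ic=\Delta[\Sc]$ is the image of the compact set $\Sc$ under the linear (hence continuous) map $\Delta$, so $\Ic$ is compact; since contractivity of $D$ already entails enough regularity to make $\sigma\mapsto D(\rho,\sigma)$ lower semicontinuous, the infimum is attained at some $\sigma^\star\in\Ic$. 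Then $C(\rho,\P)=0$ forces $D(\rho,\sigma^\star)=0$, and the assumed nondegeneracy of $D$ yields $\rho=\sigma^\star\in\Ic$.

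For part (ii), let $\Lambda_{\bic}$ be any maximally block‑incoherent operation and let $\sigma^\star\in\Ic$ attain the infimum for $\rho$, i.e.\ $D(\rho,\sigma^\star)=C(\rho,\P)$. The key step is the observation that $\Lambda_{\bic}[\sigma^\star]\in\Ic$, which is exactly the defining property $\Lambda_{\bic}[\Ic]\subseteq\Ic$ from Eq.~(\ref{mio}). Thus $\Lambda_{\bic}[\sigma^\star]$ is an admissible (generically suboptimal) competitor in the infimum defining $C(\Lambda_{\bic}[\rho],\P)$, so that $C(\Lambda_{\bic}[\rho],\P)\leq D(\Lambda_{\bic}[\rho],\Lambda_{\bic}[\sigma^\star])\leq D(\rho,\sigma^\star)=C(\rho,\P)$, where the middle inequality is precisely the contractivity of $D$ under the quantum operation $\Lambda_{\bic}$. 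If one prefers not to assume attainment of the infimum, the same chain runs with $D(\rho,\sigma^\star)\leq C(\rho,\P)+\varepsilon$ and $\varepsilon\to0$.

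The argument is essentially routine; the only point requiring minor care is the attainment of the infimum (equivalently, the $\varepsilon$‑bookkeeping), which I expect to be the sole ``obstacle,'' and it is dispatched by compactness of $\Ic$ together with the mild regularity that contractivity of $D$ forces. No ingredient beyond the stated definition of BIC operations and the hypotheses on $D$ is needed.
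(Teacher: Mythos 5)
Your proof is correct and takes essentially the same route as the paper, whose own argument is exactly your part~(ii) chain $C(\rho,\P)=D(\rho,\sigma^\star)\geq D(\Lambda_{\bic}[\rho],\Lambda_{\bic}[\sigma^\star])\geq\inf_{\tau\in\Ic}D(\Lambda_{\bic}[\rho],\tau)$ using $\Lambda_{\bic}[\sigma^\star]\in\Ic$; the paper simply leaves part~(i) implicit, which you spell out. The one soft spot is your claim that contractivity of $D$ by itself forces lower semicontinuity of $\sigma\mapsto D(\rho,\sigma)$ --- that is not automatic and is better stated as an additional (mild, standard) regularity assumption on $D$, though it is irrelevant for monotonicity, where your $\varepsilon$-bookkeeping avoids attainment altogether.
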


\begin{proof}
The proof of the second assertion is analogous to the case of coherence theory \cite{baumgratz2014quantifying,streltsov2016quantum}.
For convenience, we outline it here,
\begin{align}
C(\rho,\P)&=\inf_{\sigma\in\Ic}D(\rho,\sigma) = D(\rho,\sigma^*) \ncl
&\geq D(\Lambda_{\bic}[\rho],\Lambda_{\bic}[\sigma^*]) \ncl
&\geq \inf_{\tau\in\Ic} D(\Lambda_{\bic}[\rho],\tau) \ncl
&=C(\Lambda_{\bic}[\rho]),
\end{align}
where $\sigma^*$ denotes a state that achieves the minimum. The first inequality follows from the contractive property of the distance, and the second equality holds because $\Lambda_{\bic}[\sigma^*]\in I$.
\end{proof}

In the following, we focus on the relative-entropy-based block-coherence measure which was introduced in~\cite{aberg2006quantifying}.
The relative entropy of block coherence is defined as
\begin{align}
C_{\rel}(\rho)\defeq\min_{\sigma\in\Ic}S(\rho||\sigma),
\end{align}
where $S(\rho||\sigma)$ denotes the quantum relative entropy. The coherence quantifier $C_{\rel}(\rho)$ is convex and satisfies nonnegativity. Moreover, the monotonicity property is proven in the following Proposition.

\begin{proposition}
The relative entropy of block coherence is a block-coherence measure and admits the following simple form
\begin{align}
C_{\rel}(\rho) = S(\Delta[\rho])-S(\rho).
\end{align}
\end{proposition}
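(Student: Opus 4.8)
The plan is to prove the two claims in order: first the closed-form expression for $C_{\rel}(\rho)$, and then use it (together with Proposition~\ref{prop:dbcoh}) to conclude that $C_{\rel}$ is a block-coherence measure. For the formula, I would start from the definition $C_{\rel}(\rho)=\min_{\sigma\in\Ic}S(\rho\|\sigma)$ and show that the minimum is attained at $\sigma=\Delta[\rho]$. The key computational input is the identity $S(\rho\|\Delta[\sigma])=S(\Delta[\rho]\|\Delta[\sigma])+\bigl(S(\Delta[\rho])-S(\rho)\bigr)$ for any state $\sigma$, which follows because $\Delta$ is the orthogonal projection onto the block-diagonal part: one writes $\tr[\rho\log\Delta[\sigma]]=\tr[\Delta[\rho]\log\Delta[\sigma]]$, using that $\log\Delta[\sigma]$ is block-diagonal and that $\tr[\rho B]=\tr[\Delta[\rho]B]$ for any block-diagonal $B$ (since $\Delta$ is self-adjoint and $\Delta[B]=B$). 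Hence
\begin{align}
S(\rho\|\sigma)=S(\rho\|\Delta[\rho])+S(\Delta[\rho]\|\sigma)\geq S(\rho\|\Delta[\rho])
\end{align}
for every $\sigma\in\Ic$, with equality at $\sigma=\Delta[\rho]$, by nonnegativity of the relative entropy. This gives $C_{\rel}(\rho)=S(\rho\|\Delta[\rho])$, and expanding the right-hand side yields $\tr[\rho\log\rho]-\tr[\rho\log\Delta[\rho]]=-S(\rho)-\tr[\Delta[\rho]\log\Delta[\rho]]=S(\Delta[\rho])-S(\rho)$, which is the claimed form.

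Next I would establish that $C_{\rel}$ qualifies as a block-coherence measure, i.e.\ it satisfies faithfulness and monotonicity. Faithfulness is immediate from the variational form: $C_{\rel}(\rho)=0$ iff $S(\rho\|\Delta[\rho])=0$ iff $\rho=\Delta[\rho]$ iff $\rho\in\Ic$, using that the relative entropy vanishes exactly on equal arguments. For monotonicity under BIC operations $\Lambda_{\bic}$, I would invoke Proposition~\ref{prop:dbcoh}: the relative entropy $D(\rho,\sigma)=S(\rho\|\sigma)$ is nonnegative, vanishes iff $\rho=\sigma$, and is contractive under all CPTP maps (the data-processing inequality). Therefore $C_{\rel}(\rho)=\inf_{\sigma\in\Ic}S(\rho\|\sigma)$ is a distance-based block-coherence quantifier and inherits both properties directly. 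Convexity, if one wants to mention it, follows from joint convexity of the relative entropy.

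The main obstacle — really the only nontrivial point — is justifying the Pythagorean-type decomposition $S(\rho\|\sigma)=S(\rho\|\Delta[\rho])+S(\Delta[\rho]\|\sigma)$ for $\sigma\in\Ic$, equivalently the identity $\tr[\rho\log\Delta[\sigma]]=\tr[\Delta[\rho]\log\Delta[\sigma]]$. The cleanest argument is: since the projectors $P_i$ are mutually orthogonal and sum to $\1$, any block-diagonal operator $B=\sum_i P_i B P_i$ commutes with each $P_i$, and $\tr[\rho B]=\sum_i\tr[\rho P_i B P_i]=\sum_i\tr[P_i\rho P_i\,B]=\tr[\Delta[\rho]B]$; applying this to $B=\log\Delta[\sigma]$ (which is block-diagonal because $\Delta[\sigma]$ is, noting one must restrict attention to the support so the logarithm is well-defined and still block-diagonal) gives the claim. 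One should also note the standard caveat that the identity and the ensuing inequality hold with the convention that $S(\rho\|\sigma)=+\infty$ when $\supp\rho\not\subseteq\supp\sigma$, and that $\supp\rho\subseteq\supp\Delta[\rho]$ always holds, so the decomposition is valid. Everything else is bookkeeping with traces and logarithms.
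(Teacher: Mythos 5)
Your proposal is correct and follows essentially the same route as the paper: the same key identity $\tr[\rho\log\sigma_{\inc}]=\tr[\Delta[\rho]\log\sigma_{\inc}]$ (via self-adjointness of $\Delta$ and block-diagonality of the logarithm), the same Pythagorean decomposition $S(\rho\|\sigma_{\inc})=S(\Delta[\rho])-S(\rho)+S(\Delta[\rho]\|\sigma_{\inc})$ to locate the minimizer at $\Delta[\rho]$, and the same appeal to the distance-based Proposition for monotonicity. Your added care about supports and the $+\infty$ convention is a welcome refinement but does not change the argument.
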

\begin{proof}
The relative entropy is contractive under quantum operations \cite{vedral1998entanglement} and thus $C_{\rel}(\rho)$ satisfies the monotonicity condition because of Prop.~\ref{prop:dbcoh}. The simplified form $C_{\rel}(\rho) = S(\Delta[\rho])-S(\rho)$ was first stated in \cite{aberg2006quantifying}, and is proven analogous to coherence theory \cite{matera2016coherent,chitambar2016assisted}. For the convenience of the reader, we outline the proof. Observe that 
\begin{align}
\tr[\Delta[\rho]\log\sigma_\inc]&=\tr[\rho\Delta[\log\sigma_\inc]] \ncl
&=\tr[\rho\log\sigma_\inc],
\end{align}
for any block-incoherent state $\sigma_\inc$, since the operator logarithm of a nonnegative matrix preserves the block-diagonal structure, as it only acts on the eigenvalues. This implies that 
\begin{align}
S(\rho||\sigma_\inc) &=\tr[\rho\log\rho]-\tr[\rho\log\sigma_\inc] \ncl
&=-\tr[\Delta[\rho]\log\Delta[\rho]]+\tr[\rho\log\rho] \ncl
&\quad+\tr[\Delta[\rho]\log\Delta[\rho]]-\tr[\Delta[\rho]\log\sigma_\inc] \ncl
&=S(\Delta[\rho])-S(\rho)+S(\Delta[\rho]||\sigma_\inc).
\end{align}
The third term is nonnegative, $S(\Delta[\rho]||\sigma_\inc)\geq0$, and therefore the minimum over block-incoherent states is achieved when it vanishes, i.e., $\sigma_\inc=\Delta[\rho]$.
\end{proof}

\section{POVM-based coherence measures}\label{app:pbcohmeas}

In this part, we provide details and proofs concerning POVM-based coherence measures that are introduced in the main text as the first constituent of our resource theory.

First, we focus on the canonical Naimark extension defined on $\H\ot\H_R$ and a general class of block-coherence measures.

\begin{proposition}\label{prop:schrodinger}
Let $C(\rho,\E)=C(\Ec[\rho],\P)$ be a POVM-based coherence measure, evaluated on the canonical Naimark extension $\P$ of $\E$. The measure can also be expressed as
\begin{align}
C(\rho,\E) = C(\Ec_V[\rho],\{\1\ot\proj{i}\}),
\end{align}
where now the interaction $V$ is attributed to the embedding $\Ec_V[\rho] =V\rho\ot\proj{1}V\ad= \sum_{i,j}A_i\rho A_j\ad\ot\ketbra{i}{j}$.
The measure is invariant under a change of measurement operators, i.e., under the transformation $A_i\to U_iA_i$ with unitary $U_i$. 
\end{proposition}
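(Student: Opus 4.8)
The plan is to verify the two claimed identities in sequence and then read off invariance as an immediate consequence. First I would establish the reformulation $C(\rho,\E) = C(\Ec_V[\rho],\{\1\ot\proj{i}\})$. Starting from the definition $C(\rho,\E) = C(\Ec[\rho],\P)$ with $\P = \{P_i\}$, $P_i = V\ad(\1\ot\proj{i})V$, I would apply the unitary invariance of the block-coherence measure $C$ (the hypothesis $C(\rho',\P) = C(U\rho'U\ad, U\P U\ad)$, here with $U = V$). This gives
\begin{align}
C(\Ec[\rho],\{V\ad(\1\ot\proj{i})V\}) = C\bigl(V\,\Ec[\rho]\,V\ad,\ \{V V\ad(\1\ot\proj{i})V V\ad\}\bigr) = C\bigl(V(\rho\ot\proj{1})V\ad,\ \{\1\ot\proj{i}\}\bigr),
\end{align}
which is exactly $C(\Ec_V[\rho],\{\1\ot\proj{i}\})$. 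The expansion $\Ec_V[\rho] = \sum_{i,j}A_i\rho A_j\ad\ot\ketbra{i}{j}$ follows by substituting $V = \sum_{i,a}A_{i,a}\ot\ketbra{i}{a}$ from the Naimark construction, using $A_{i,1}=A_i$ and $\braket{a}{1}=\delta_{a,1}$, so that only the $a=b=1$ terms survive: $V(\rho\ot\proj{1})V\ad = \sum_{i,j}A_{i,1}\rho A_{j,1}\ad\ot\ketbra{i}{j}$.

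Next I would prove invariance under $A_i\to U_iA_i$. Two different choices of measurement operators $\{A_i\}$ and $\{A_i' = U_iA_i\}$ yield two different interaction unitaries, but a convenient way to compare them is to note that replacing $A_i$ by $U_iA_i$ in the formula above replaces $\Ec_V[\rho]$ by $\sum_{i,j}U_iA_i\rho A_j\ad U_j\ad \ot \ketbra{i}{j} = W\,\Ec_V[\rho]\,W\ad$ with the block-diagonal unitary $W \defeq \sum_i U_i\ot\proj{i}$. Crucially, $W$ commutes with each projector $\1\ot\proj{i}$, so $W(\1\ot\proj{i})W\ad = \1\ot\proj{i}$, i.e. $W$ is a block-unitary for the measurement $\{\1\ot\proj{i}\}$. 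Applying unitary invariance of $C$ once more with $U=W$ gives $C(W\,\Ec_V[\rho]\,W\ad, \{\1\ot\proj{i}\}) = C(\Ec_V[\rho], \{\1\ot\proj{i}\})$, which is the desired equality of the measure for the two choices. (Alternatively one can invoke Proposition~\ref{prop:buinv} directly for the block-unitary invariance of any block-coherence measure, once the measurement basis form $\{\1\ot\proj{i}\}$ is in hand.)

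The main subtlety to handle carefully is the extension of the isometry $\tilde V$ to a full unitary $V$: the "filling-up" columns $A_{i,a}$ for $a\ge 2$ are not canonically determined, so a priori $V$ (and hence $\Ec_V$) depends on arbitrary choices. The resolution, and the reason the argument goes through cleanly, is that these extra columns never contribute: in $V(\rho\ot\proj{1})V\ad$ the ancilla is pinned to $\ket{1}$ on both sides, so only $A_{i,1}=A_i$ appears, and the final expression $\sum_{i,j}A_i\rho A_j\ad\ot\ketbra{i}{j}$ is manifestly independent of the extension. I would make this explicit as a remark so the reader sees that $\Ec_V$ is well-defined as a map on $\Se$ regardless of how the Naimark unitary was completed; the only genuine freedom left is the unitary $U_i$ on each measurement operator, which is precisely what the invariance statement absorbs.
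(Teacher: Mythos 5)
Your proposal is correct and follows essentially the same route as the paper: unitary invariance with $U=V$ transfers the Naimark unitary from the measurement to the state, and the change $A_i\to U_iA_i$ is absorbed into the block-diagonal unitary $\sum_i U_i\ot\proj{i}$, handled via (block-)unitary invariance exactly as in Proposition~\ref{prop:buinv}. Your added remark that the completion of the isometry $\tilde V$ to a unitary never enters $V(\rho\ot\proj{1})V\ad$ is a worthwhile clarification the paper leaves implicit, but it does not change the argument.
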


\begin{proof} 
By definition of $C(\rho,\E)$, $C(\rho',\P)$ is a unitarily-invariant block-coherence measure, that is, $C(\rho',\P)=C(U\rho' U\ad,U\P U\ad)$ holds for all unitaries $U$ on $\H'$ and all states $\rho'\in\Sc'$. Therefore, it holds that
\begin{align}
C(\rho,\E)&=C(\Ec[\rho],\P) \ncl
&=C(\rho\ot\proj{1},\{V\ad\1\ot\proj{i}V\}) \ncl
&=C(V\rho\ot\proj{1}V\ad,\{\1\ot\proj{i}\}) \ncl
&=C(\Ec_V[\rho],\{\1\ot\proj{i}\}),
\end{align}
where the third equality follows from unitary invariance.

This implies that $C(\rho,\E)$ is invariant under a change of measurement operators, $A_i\to U_iA_i$, with unitary $U_i$, and is therefore well-defined. Indeed, the unitary transformation acts on the embedded state as $\Ec_V[\rho]\to\sum_{i,j}U_iA_i\rho A_j\ad U_j\ad\ot\ketbra{i}{j}=U\Ec_V[\rho]U\ad$, with a block-diagonal unitary $U=\sum_iU_i\ot\proj{i}$. Since every block-coherence measure is invariant under block-diagonal unitaries, as shown in Prop.~\ref{prop:buinv}, the measure is invariant under a change of measurement operators.
\end{proof}

In the remainder of this part we prove Lemma \ref{lem:pbmeas} from the main text.
For the claim that $C_{\rel}(\rho,\E)$ is independent of the chosen Naimark extension of a POVM $\E=\{E_i\}$, we consider any Naimark extension $\P$ of $\E$ as defined in Eq.~(\ref{eq:naimark}), not necessarily of tensor product form. To do so, we employ the generalized definitions introduced in App.~\ref{app:naimark},
\begin{align}
\Ec[X]=X\op0,\quad \He=\{\ket{\psi}\op0\},\quad \Pie=\1\op0.
\end{align}
Moreover, we define the generalized POVM-based coherence measure $C_{\rel}(\rho,\E)$ as
\begin{align}
C_{\rel}(\rho,\E)\defeq S(\Delta[\rho\op0])-S(\rho).
\end{align}

\begin{proof}[$\bullet$ Proof of \emph{\bf Lemma 1} from main text]
First, we show that the POVM-based coherence measure obtained from the relative entropy $C_{\rel}(\rho,\E)=C_{\rel}(\rho\op0,\P)$ is independent of the choice of Naimark extension $\P$.
We do that by showing that the eigenvalues of $\Delta[\rho\op0]$ are the same for any two Naimark extensions used to define the dephasing $\Delta$.
The assertion then readily follows because the von Neumann entropy is a function of the eigenvalues of a state.

Let $\P$ on $\H'$ and $\tilde \P$ on $\tilde\H'$ be two Naimark extensions of the same POVM $\E$, such that without loss of generality $d'\geq \tilde d'$ holds. We embed the smaller Hilbert space $\tilde\H'$ canonically into the larger Hilbert space $\H'$ such that all operators on the smaller space are filled up appropriately with zeros.
First, we show that $P_i\rho\op0P_i$ and $\tilde P_i\rho\op0 \tilde P_i$ have the same eigenvalues. 
By definition of the Naimark extension it holds that 
\begin{align}\label{2naimark}
\tr[P_i\rho\op0]=\tr[\tilde P_i\rho\op0]
\end{align}
for all system states $\rho\in\Sc$.
If $\rho$ is a pure state, $P_i\rho\op0P_i$ and $\tilde P_i\rho\op0\tilde P_i$ are both rank-1 operators that because of Eq.~(\ref{2naimark}) have the same nonzero eigenvalue.
For the mixed state case, we consider the following.
From the definition of the Naimark extension it follows that 
\begin{align}\label{aux2}
\Pie P_i P_i\Pie=E_i\op0=\Pie \tilde P_i \tilde P_i \Pie,
\end{align}
where $\tilde P_i$ is extended to $\H'$, implying that $\sum_i\tilde P_i$ is the projector onto $\tilde\H'$. The equation follows from Eq.~(\ref{2naimark}) because the system states provide a POVM-tomography on the subspace $\He$. It is known, that Eq.~(\ref{aux2}) implies that there exists a unitary $Q_i$ on $\H'$ such that 
\begin{align}\label{aux3}
P_i\Pie=Q_i\tilde P_i\Pie.
\end{align}
Concretely, these matrices have singular value decomposition
\begin{align}
P_i\Pie=U_i\begin{pmatrix} \Sigma_{r_i}&\\&0 \end{pmatrix} V_i\ad,\quad
\tilde P_i\Pie=\tilde U_i\begin{pmatrix} \Sigma_{r_i}&\\&0 \end{pmatrix}  V_i\ad,
\end{align}
for some unitaries $U_i,\tilde U_i,V_i$, $r_i=\operatorname{rank}E_i$, and a $r_i\times r_i$ diagonal matrix $\Sigma_{r_i}$ containing the square root of the nonzero eigenvalues of $E_i$.
Then, the unitary is given by $Q_i=U_i \tilde U_i\ad$.
Now, the unitaries $Q_i$ can be combined into a single unitary, by noting that the restriction $Q_i|_{\tilde\pi_i}$ with $\tilde\pi_i=\im\tilde P_i$ is a map from and to orthogonal subspaces $Q_i|_{\tilde\pi_i}\colon\tilde\pi_i\to\pi_i$. Thus, we can define the block-diagonal unitary $Q=\op_iQ_i|_{\tilde\pi_i}\op\1$, where the last term is the identity on the subspace $(\tilde\H')^\perp$ of $\H'$. With that, we have constructed a unitary $Q$ that relates the two Naimark extension acting on the subspace $\He$, namely $P_i\Pie=Q\tilde P_i\Pie$.
Consequently, $\Delta[\rho\op0]=\sum_iP_i\Pie(\rho\op0)\Pie P_i
=Q\sum_i\tilde P_i\Pie(\rho\op0)\Pie\tilde P_iQ\ad=Q\tilde\Delta[\rho\op0]Q\ad$ holds
, i.e., $\Delta[\rho\op0]$ and $\tilde\Delta[\rho\op0]$ have the same eigenvalues.
Since the von Neumann entropy solely depends on the eigenvalues of its argument, we conclude that
\begin{align}
C_{\rel}(\rho\op0,\P)&=S(\Delta[\rho\op0])-S(\rho) \ncl
&=S(\tilde\Delta[\rho\op0])-S(\rho) \ncl
&=C_{\rel}(\rho\op0,\tilde\P),
\end{align}
which means that $C_{\rel}(\rho,\E)$ is independent of the Naimark extension used to define it.

The relative entropy of POVM-based coherence admits an expression just in term of system degrees of freedom, i.e., without making reference to the Naimark space. We need to show that $C_{\rel}$ can be expressed as $C_{\rel}(\rho,\E)=H(\{p_i(\rho)\})+\sum_ip_i(\rho)S(\rho_i)-S(\rho)$, with $p_i(\rho)=\tr[E_i\rho]$, and $\rho_i=\frac1{p_i}A_i\rho A_i\ad$, and where $S$ denotes the von-Neumann entropy, and $H$ the Shannon entropy.
Let $\Delta[\cdot]=\sum_iP_i\cdot P_i$ be the block-dephasing operator of the canonical Naimark extension $\P=\{V\ad\1\ot\proj{i}V\}$ with $V(\rho\ot\proj{1})V\ad=\sum_{i,j}A_i\rho A_j\ad\ot\ketbra{i}{j}$.
Then it holds that
\begin{align}
C_{\rel}(\rho,\E)&=C_{\rel}(\rho\ot\proj{1},\P) \ncl
&=S(\Delta[\rho\ot\proj{1}])-S(\rho\ot\proj{1}) \ncl
&=S(\sum_i\1\ot\proj{i}V(\rho\ot\proj{1})V\ad\1\ot\proj{i})-S(\rho) \ncl
&=S(\sum_iA_i\rho A_i\ad\ot\proj{i})-S(\rho) \ncl
&=S(\sum_ip_i\rho_i\ot\proj{i})-S(\rho) \ncl
&=H(\{p_i(\rho)\})+\sum_ip_iS(\rho_i)-S(\rho),
\end{align}
where the last equality follows from the joint entropy theorem~\cite{nielsenchuang}.
\end{proof}

\section{Minimal and maximal POVM-based coherence}\label{app:minmaxcoh}

In this part, we prove the characterization of POVM-incoherent states from Lemma~\ref{lem:sysincoherent}. Moreover, we show general upper and lower bounds on $C_{\rel}(\rho,\E)$ and discuss classes of POVMs for which these bounds can or cannot be attained.

\begin{proof}[$\bullet$ Proof of \emph{\bf Lemma 2} from main text]
We need to show that $C_{\rel}(\rho,\E)=C_{\rel}(\rho\ot\proj{1},\P)=0$ is equivalent to $E_i \rho E_j=0 \ \forall\ i\neq j\in\{1,\dotsc,n\}$.
The set $\Ic$ of block-incoherent states with respect to the canonical Naimark extension $\P=\{V\ad\1\ot\proj{i}V\}$ is composed of states of the form
\begin{align}
\Ic=\{ V\ad\sum_ip_i\rho_i\ot\proj{i}V \},
\end{align}
as these are the states that are invariant under the dephasing operation $\Delta[\cdot]=\sum_iP_i\cdot P_i$. Here, $\{\rho_i\}$ is a set of states and $\{p_i\}$ a probability distribution. A state $\rho\ot\proj{1}\in\Se$ is of the above form if and only if 
$V\rho\ot\proj{1}V\ad=\sum_ip_i\rho_i\ot\proj{i}$, which is equivalent to
\begin{align}\label{aux6}
&\sum_{i,j}A_i\rho A_j\ad\ot\ketbra{i}{j}=\sum_ip_i\rho_i\ot\proj{i}\ncl
\Leftrightarrow\ \ &A_i \rho A_j\ad=0 \ \ \forall\ i\neq j\in\{1,\dotsc,n\}.
\end{align}
Since $E_i=A_i\ad A_i$, the condition ($\ref{aux6}$) implies $E_i \rho E_j=0 \ \forall\ i\neq j\in\{1,\dotsc,n\}$. The converse implication is also true which can be seen by employing the Moore-Penrose inverse $X\ginv$ of a matrix $X$~\cite{barata2012moore}. It has the properties $X\ginv X=\Pi_{\supp X}$, and $XX\ginv=\Pi_{\im X}$, with the projectors onto the support and image of $X$, respectively. Together with $\supp X\ad=\im X$ and $\supp X\ad=\im X$ it follows that 
\begin{align}
&E_i \rho E_j=0  \ncl
\Leftrightarrow\ &A_i\ad A_i \rho A_j\ad A_j=0  \ncl
\Rightarrow\ &(A_i\ad)\ginv A_i\ad A_i \rho A_j\ad A_jA_j\ginv=0 \ncl
\Leftrightarrow\ &\Pi_{\im A_i} A_i \rho A_j\ad \Pi_{\supp A_j\ad}=0 \ncl
\Leftrightarrow\ &A_i \rho A_j\ad =0.
\end{align}
Since $C_{\rel}(\rho,\E)$ is independent of the choice of Naimark extension, we thus obtain a general characterization of POVM-incoherent states.
\end{proof}

POVM-incoherent states do not exist for any POVM, in particular a certain class of POVMs yields strictly positive coherence for any state. This shows that there can be a finite gap between the set of embedded states $\Se\subseteq\Sc'$ and the set of block-incoherent states $\Ic\subseteq\Sc'$.

\begin{proposition}[Existence of POVM-incoherent states]
Let $\E$ be a POVM whose effects have rank one and no effect is a projector. The set of POVM-incoherent states of $\E$ is empty.
\end{proposition}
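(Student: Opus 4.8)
The plan is to invoke Lemma~\ref{lem:sysincoherent}, which says that a state $\rho$ is POVM-incoherent for $\E=\{E_i\}_{i=1}^{n}$ precisely when $E_i\rho E_j=0$ for all $i\neq j$, and to derive a contradiction from the two hypotheses (rank-one effects, none a projector). First I would combine the off-diagonal conditions with the completeness relation $\sum_iE_i=\1$: writing $\rho=\bigl(\sum_iE_i\bigr)\rho\bigl(\sum_jE_j\bigr)$ and dropping the vanishing cross terms gives $\rho=\sum_iE_i\rho E_i$, and taking the trace (with cyclicity) yields the identity
\begin{align}
1=\tr\rho=\sum_i\tr[E_i^2\rho]=\tr\!\Bigl[\Bigl(\sum_iE_i^2\Bigr)\rho\Bigr].
\end{align}

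Next I would show that $\1-\sum_iE_i^2$ is \emph{strictly} positive. Since each effect has rank one and $E_i\ge0$, write $E_i=\lambda_i\proj{\phi_i}$ with $\ket{\phi_i}$ a unit vector and $\lambda_i>0$; from $\1-E_i=\sum_{j\neq i}E_j\ge0$ one gets $\lambda_i\le1$, and the assumption that $E_i$ is not a projector upgrades this to $\lambda_i<1$. Hence
\begin{align}
\1-\sum_iE_i^2=\sum_i\bigl(\lambda_i-\lambda_i^2\bigr)\proj{\phi_i}=\sum_i\lambda_i(1-\lambda_i)\proj{\phi_i}\ge0,
\end{align}
and its range equals $\lspan\{\ket{\phi_i}\}_i$, which is all of $\H$ because $\sum_i\lambda_i\proj{\phi_i}=\1$ already has full range. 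Therefore $\1-\sum_iE_i^2$ is positive definite.

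Finally, since $\rho$ is a nonzero positive operator and $\1-\sum_iE_i^2$ is positive definite, one has $\tr[(\1-\sum_iE_i^2)\rho]>0$, i.e.\ $\tr[(\sum_iE_i^2)\rho]<1$, contradicting the identity above; hence no POVM-incoherent state exists. I do not anticipate a serious obstacle here, but the one point that genuinely needs care is the strict positivity of $\1-\sum_iE_i^2$: it is not enough to know that $\sum_iE_i^2\neq\1$ (which already follows from $\tr[E_i^2]=\lambda_i^2<\lambda_i=\tr[E_i]$), since a merely positive semidefinite deficit could have nontrivial kernel containing $\supp\rho$. The argument therefore uses both the rank-one structure of the effects and the completeness relation to conclude that this deficit has trivial kernel, and it is exactly this step that fails when some effect is a projector (then that effect contributes nothing to $\1-\sum_iE_i^2$) or has higher rank.
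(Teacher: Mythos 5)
Your proof is correct, and it takes a genuinely different route from the paper. The paper works directly with the relative-entropy formula for rank-one POVMs, $C_{\rel}(\rho,\E)=H(\{p_i(\rho)\})-S(\rho)$, and shows $H(\{p_i\})>S(\rho)$ strictly via the concavity of $\eta(p)=-p\log p$ together with a rather delicate analysis of the equality conditions in Jensen's inequality (which requires appending a dummy outcome to turn $p(k|i)$ into a conditional distribution and then ruling out the two ways equality can occur). You instead start from the algebraic characterization of Lemma~\ref{lem:sysincoherent}, combine $E_i\rho E_j=0$ for $i\neq j$ with completeness to get $\rho=\sum_iE_i\rho E_i$ and hence $\tr[(\sum_iE_i^2)\rho]=1$, and then contradict this by showing $\1-\sum_iE_i^2=\sum_i\lambda_i(1-\lambda_i)\proj{\phi_i}$ is positive definite (its kernel is $\lspan\{\ket{\phi_i}\}^\perp=\{0\}$ by completeness, and each coefficient is strictly positive because $0<\lambda_i<1$). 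Your key step — the strict positivity of the deficit operator — is handled correctly, and you rightly identify that this is exactly where the hypotheses (rank one, no projector) enter and where the argument would break down otherwise. What each approach buys: yours is more elementary and avoids entropy entirely, giving a cleaner structural reason for emptiness; the paper's argument is tied to the specific measure $C_{\rel}$ and, as a byproduct, also yields the observation that the maximally mixed state is not incoherent whenever some $\tr[E_i]<1$, information that your trace argument does not directly produce.
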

\begin{proof}
The system state $\rho$ is incoherent if and only if
$C_{\rel}(\rho,\E)=0$. For rank-1 POVMs, $E_i=\proj{\phi_i}$, the relative entropy of POVM-based coherence is given by
\begin{align}
C_{\rel}(\rho,\E) = H(\{p_i(\rho)\})-S(\rho),
\end{align}
with $p_i(\rho)=\bra{\phi_i}\rho\ket{\phi_i}$. Since no effect is a projector it holds that $\braket{\phi_i}{\phi_i}<1$ for all $i\in\{1,\dotsc,n\}$.
We show that $H(\{p_i(\rho)\})>S(\rho)$ for all states $\rho$. Let $\rho=\sum_k\lambda_k\proj{k}$ be the state's spectral decomposition and let $\eta(p)=-p\log p$ for $p\in[0,1]$, which is a strictly concave function. Then,
\begin{align}
p_i&=\bra{\phi_i}\rho\ket{\phi_i}
=\sum_k\lambda_k \lvert\braket{\phi_i}{k}\rvert^2 
=\sum_k\lambda_k p(i|k)
\end{align}
where $p(i|k)=\lvert\braket{\phi_i}{k}\rvert^2$ is a conditional probability distribution, i.e. 
$\sum_i p(i|k)=1\ \forall k$, while $p(k|i)$ is not since
\begin{align}
\sum_k p(k|i)=\sum_k\lvert\braket{\phi_i}{k}\rvert^2=\braket{\phi_i}{\phi_i}<1.
\end{align}
But $p(k|i)$ can be made a conditional probability distribution by appending an outcome $k_0$ such that
$p(k_0|i)=1-\braket{\phi_i}{\phi_i}$. 
We also set $\lambda_{k_0}=0$ and write $k'=(k_0,k)$ whenever we wish to sum over all outcomes.
With that we can write
\begin{align}
p_i=\sum_k\lambda_k p(i|k)=\sum_{k'}\lambda_{k'} p(k'|i).
\end{align}
Now, we can derive a relation between $H(\{p_i\})$ and $S(\rho)$,
\begin{align}
H(\{p_i\})&=\sum_i\eta(p_i) = \sum_i\eta(\sum_{k'} p(k'|i) \lambda_{k'})	\ncl
&\geq \sum_{i,k'} p(k'|i)\eta(\lambda_{k'}) = \sum_{i,k} p(i|k)\eta(\lambda_{k}) \ncl
&=\sum_k\eta(\lambda_k)= S(\rho),
\end{align}
where the inequality follows from the concativity of $\eta$, and we have used that 
$\eta(\lambda_{k_0})=0$ and $p(k|i)=p(i|k)$.
Since $\eta$ is a strictly concave function, equality in the third row is equivalent to one of the following conditions
\begin{enumerate}[label=\roman{*}.]
\item $p(k|i)=\delta_{k=f(i), i} \quad \forall k\ \textrm{with}\ \lambda_{k}\neq0$,
\item $\lambda_{k'}=\lambda_{l'}$ for all $k',l'$ with $p(k'=l'|i)\neq\delta_{k'=f(i), i}$
\end{enumerate}
where $f$ denotes an index function.
The second condition is formulated as above because for the qutrit incoherent basis $\{\ket{i}\}_{i=1}^3$ the state $\frac14(\proj{+}+\proj{-})+\frac12\proj{2}$ with $\ket{\pm}=\frac1{\sqrt2}(\ket{0}\pm\ket{1})$ is incoherent.
For the POVM under consideration, the first condition cannot be met since $|\braket{\psi}{\phi_i}|^2<1$ for all $i$ and all states $\ket{\psi}$.
The second condition also cannot be met because $p(k'|i)\neq\delta_{k'=f(i), i}$ holds for all
$k' \in\{0,\dotsc,d\}$. But not all eigenvalues can be equal since $\sum_{k'}\lambda_{k'}=1$ while $\lambda_{k_0}=0$. Moreover, the second condition cannot be fulfilled for any POVM with $\tr[E_i]<1$ for some index $i$, since then the additional outcome $k_0$ with $\lambda_{k_0}=0$ is needed.
This implies that the maximally mixed state is not incoherent for such measurements.
Altogether we conclude that no system incoherent states exist for the considered POVM.
\end{proof}

Since system incoherent states do not exist for any measurement, it is important to characterize states with minimal and maximal POVM-based coherence. 
The measure $C_{\rel}(\rho,\E)$ is bounded by the extremal values of the corresponding block-coherence measure on $\Sc'$ given by $0\leq C(\rho',\P)\leq \log(d')$. However, the upper bound can be made tighter.

\begin{proposition}
Let $\E$ be an $n$-outcome POVM. The POVM-based coherence measure $C_{\rel}(\rho,\E)$ satisfies the bounds $0\leq C_{\rel}(\rho,\E)\leq\log n$.
\end{proposition}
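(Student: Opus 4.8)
My plan is to evaluate $C_{\rel}(\rho,\E)$ through its relative-entropy form. By Lemma~\ref{lem:pbmeas} the quantity is independent of the Naimark extension, so I may pick any $\P=\{P_i\}_{i=1}^{n}$ on $\H'$ and write $C_{\rel}(\rho,\E)=C_{\rel}(\Ec[\rho],\P)=S(\Delta[\Ec[\rho]])-S(\Ec[\rho])$; moreover $S(\Ec[\rho])=S(\rho)$, since the embedding only appends a pure ancilla (or a zero block) and hence leaves the spectrum unchanged.

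For the lower bound I would invoke the identity established above in the proof of the block-coherence relative-entropy formula, $S(\Delta[\omega])-S(\omega)=S(\omega\,\|\,\Delta[\omega])$ (the operator logarithm of $\Delta[\omega]$ is block-diagonal, so $\tr[\omega\log\Delta[\omega]]=\tr[\Delta[\omega]\log\Delta[\omega]]$). Thus $C_{\rel}(\rho,\E)=S\bigl(\Ec[\rho]\,\|\,\Delta[\Ec[\rho]]\bigr)\ge0$ by nonnegativity of the quantum relative entropy; equivalently, this is immediate from $C_{\rel}(\rho,\E)=\min_{\sigma\in\Ic}S(\Ec[\rho]\,\|\,\sigma)\ge0$.

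For the upper bound the point is that the $n$-block dephasing is a uniform average of $n$ unitary conjugations: with $Z:=\sum_{j=1}^{n}\omega^{\,j-1}P_j$ and $\omega=e^{2\pi i/n}$, the operator $Z$ is unitary ($ZZ\ad=\sum_jP_j=\1$), and since $\frac1n\sum_{k=0}^{n-1}\omega^{(j-l)k}=\delta_{jl}$ together with $P_jP_l=\delta_{jl}P_j$ one obtains
\begin{align*}
\Delta[\Ec[\rho]]=\sum_jP_j\,\Ec[\rho]\,P_j=\frac1n\sum_{k=0}^{n-1}Z^k\,\Ec[\rho]\,(Z\ad)^k.
\end{align*}
Each term $Z^k\Ec[\rho](Z\ad)^k$ has the same spectrum as $\Ec[\rho]$, hence entropy $S(\rho)$, so the near-concavity bound $S\bigl(\sum_kq_k\sigma_k\bigr)\le H(\{q_k\})+\sum_kq_kS(\sigma_k)$~\cite{nielsenchuang} applied with uniform $q_k=1/n$ gives $S(\Delta[\Ec[\rho]])\le S(\rho)+\log n$, i.e.\ $C_{\rel}(\rho,\E)\le\log n$. (Alternatively, using the classical--quantum form $\Delta[\Ec_V[\rho]]=\sum_iA_i\rho A_i\ad\ot\proj{i}$ from Lemma~\ref{lem:pbmeas}, the bound follows from $\sum_ip_i(\rho)S(\rho_i)\le S(\rho)$ --- via the Araki--Lieb triangle inequality on a purification of $\rho$ --- together with $H(\{p_i(\rho)\})\le\log n$.)

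There is essentially no obstacle here: the calculation is short, and the only thing one must notice is the discrete-phase (or Stinespring) representation of the pinching $\Delta$ that surrenders exactly the factor $\log n$. What requires more care is the \emph{attainability} discussion mentioned around the statement --- equality $C_{\rel}=\log n$ forces $\rho$ pure with uniform outcome statistics and mutually orthogonal post-measurement states, and showing that no such $\rho$ exists for a given POVM (e.g.\ the trine) is handled separately in the example section.
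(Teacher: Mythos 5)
Your proof is correct, but it takes a genuinely different route from the paper's. The paper first specializes to pure states, where the analytic form of Lemma~1 collapses to $C_{\rel}(\ket{\psi},\E)=H(\{p_i(\ket{\psi})\})\leq\log n$ (since both $S(\rho)$ and all $S(\rho_i)$ vanish for pure inputs), and then extends to mixed states by invoking convexity of $C_{\rel}$. You instead give a one-shot argument valid for arbitrary states: representing the pinching $\Delta$ as a uniform mixture of $n$ unitary conjugations by powers of $Z=\sum_j\omega^{j-1}P_j$ and applying the bound $S(\sum_kq_k\sigma_k)\leq H(\{q_k\})+\sum_kq_kS(\sigma_k)$ with $q_k=1/n$. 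Both arguments are sound (your identity $\frac1n\sum_k Z^k\,\cdot\,(Z\ad)^k=\Delta$ follows from $P_jP_l=\delta_{jl}P_j$ and the geometric sum, and $\sum_iP_i=\1_{d'}$ makes $Z$ unitary on the full Naimark space). Your approach avoids any appeal to convexity of the measure and does not need the analytic form of Lemma~1 at all, at the cost of hiding where the bound is tight; the paper's approach makes the attainability condition ($\rho$ pure with uniform outcome statistics) transparent, which is exactly what is used afterwards to discuss maximally coherent states. Your parenthetical alternative, bounding $\sum_ip_iS(\rho_i)\leq S(\rho)$ via a purification, is closer in spirit to the paper's use of the analytic form but again handles mixed states directly rather than through convexity. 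The lower bound is handled identically in both (nonnegativity of the relative entropy).
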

\begin{proof}
We show the upper bound. First, we consider the pure state case $\rho=\proj{\psi}$, for which the measure reads
\begin{align}
C_{\rel}(\ket{\psi},\E) = H(\{p_i(\ket{\psi})\}).
\end{align}
The expression is maximized for states with uniform outcomes $p_i=\frac1n$ which yields $H(\{p_i(\ket{\psi})\}) \leq \log n$.
Since $C_{\rel}(\rho,\E)$ is a convex function, i.e., it decreases under mixing, the maxima are attained by pure states, and thus $C_{\rel}(\rho,\E)\leq \log n$ also holds for mixed states.
\end{proof}

The convexity of $C_{\rel}$ implies that the maximum coherence of a POVM is attained by the pure states with highest outcome entropy. However, analytically maximizing $C_{\rel}$ even for pure states is generally hard, see e.g., Ref.~\cite{szymusiak2017pure}, where the maximal value for informationally complete symmetric qubit POVMs was obtained.
Examples for POVMs that attain the upper bound are the qubit trine POVM, but also informationally complete POVMs, namely those for which there are pure states with maximal randomness gain~\cite{bischof2017measurement}. Moreover, one can readily construct rank-one POVMs in any dimension that achieve the upper bound.

Finally, we discuss states which minimize $C_{\rel}$ for a given POVM.
Because $C_{\rel}$ is a convex function on a convex set it can be shown that the set $\Mc$ of its minima is convex. 
In the qubit case, the states with minimal coherence can be found analytically. Qubit quantum states can be parameterized as $\rho(\vec r)=\frac12(\1+\vec r\cdot\vec\sigma)$ with Bloch vector $\modu{\vec r}\leq1$, and $\vec r\cdot\vec\sigma=\sum_ir_i\sigma_i$, where $\sigma_i$ denotes the $i$-th Pauli matrix. The function $\rho(\vec r)$ is affine in $\vec r$ and thus $C_{\rel}(\vec r)\defeq C_{\rel}(\rho(\vec r))$ is convex. Consequently, for any fixed POVM $\E$ we have the following optimization problem
\begin{align}
\trm{minimize}&\quad C_{\rel}(\vec r) \ncl
\trm{such that}&\quad \modu{\vec r}^2-1\leq 0
\end{align}
This is a convex optimization problems, i.e., the objective function $C_{\rel}(\vec r)$ and the inequality contraint function $g(\vec r)=\modu{\vec r}^2-1$ are convex. For such problems it is known that  any point $\vec r^*$ that fulfills the Karush–Kuhn–Tucker (KKT)~\cite{boyd2004convex} conditions is a global minimum of the objective function. One can readily check that for the problem above a point $\vec r^*$ fulfills the KKT conditions if  
\begin{align}\label{eq:kkt}
\modu{\vec r^*}^2\leq1 \quad\textrm{and}\quad \nabla_{\vec r}C_{\rel}(\vec r^*)=0.
\end{align}
Therefore, given a POVM $\E$, the minimum of $C_{\rel}(\rho,\E)$ is achieved for states $\rho(\vec r^*)$ with $\vec r^*$ from Eq.~(\ref{eq:kkt}). In dimensions higher than two, a similar analysis can be carried out with more involved constraints.

\section{POVM-incoherent operations}\label{app:povmicops}

In this part, we provide proofs for the general results concerning POVM-incoherent operations from the main text. In particular, we present the semidefinite programs that characterize the set of POVM-incoherent operations and the fidelity $F_{\max}(\rho,\sigma)$, respectively.

\begin{proof}[$\bullet$ Proof of \emph{\bf Lemma 3} from main text]
Let $\Lambda_\pic$ be a POVM-incoherent operation with respect to the POVM $\E$. By definition there exists a channel $\Lambda'_\bic$ on $\Sc'$ obeying the two properties from Def.~\ref{def:povmmio} such that
$\Lambda_\pic[\rho]\op0 = \Lambda'_\bic[\rho\op0]$.
Thus, it holds that 
\begin{align}
C(\Lambda_\pic[\rho],\E)&= C(\Lambda_\pic[\rho]\op0,\P) \ncl
&= C(\Lambda'_\bic[\rho\op0],\P) \ncl
&\leq C(\rho\op0,\P)= C(\rho,\E),
\end{align}
where the inequality is a consequence of $\Lambda'_\bic$ being an block-incoherent operation with respect to $\P$.
\end{proof}

If the Naimark space has tensor product form $\H\ot\H_R$, then due to subspace-preservation $\Lambda'$ can be decomposed as
\begin{align}
\Lambda' &=\Omega\circ\Lambda'\circ\Omega + \Lambda'\circ\Omega^\perp \ncl
&=(\Lambda\ot\1)\circ\Omega + \Lambda'\circ\Omega^\perp,
\end{align}
where $\Lambda$ is a channel on $\Sc$, $\Omega[\rho']=\Pie\rho'\Pie$ and $\Omega^\perp=\id-\Omega$. Thus, in this case we have $\Lambda'|_{\Se}=\Lambda\ot\1$, leading to the local operation $\Lambda$ on $\Sc$.

In the following, we show that the set of POVM-incoherent operations of a POVM $\E$ is independent of the choice of Naimark extension used for its definition. We consider any Naimark extension $\P$ of $\E$ as defined in Eq.~(\ref{eq:naimark}), not necessarily of tensor product form.
For that, it is instructive to read the proof of Lemma~\ref{lem:pbmeas} established in App.~\ref{app:pbcohmeas}. There, we summarized the generalized embedding definitions introduced in App.~\ref{app:naimark},
\begin{align}
\Ec[X]=X\op0,\quad \He=\{\ket{\psi}\op0\},\quad \Pie=\1\op0.
\end{align}

\begin{proof}[$\bullet$ Proof of \emph{\bf Theorem 1} from main text]
First, we prove that the set of POVM-incoherent operations is independent of the choice of Naimark extension used for its definition. 
Let $\P$ on $\H'$ and $\tilde \P$ on $\tilde\H'$ be two general Naimark extensions of the same POVM $\E$, such that without loss of generality $d'\geq \tilde d'$ holds. 
We embed the smaller Hilbert space $\tilde\H'$ canonically into the larger Hilbert space $\H'$, and all operators on the first space are filled up appropriately with zeros.
Let $\Lambda'_\P$ ($\Lambda'_{\tilde\P}$) be a channel on $\Sc'$ ($\tilde\Sc'$) obeying the two properties from Def.~\ref{def:povmmio}. Since $d'\geq \tilde d'$, the proof idea is to show that for every $\Lambda'_\P$ there exists a $\Lambda'_{\tilde\P}$ such that 
$\Lambda_\pic\defeq\Ec\ad\circ\Lambda'_\P\circ\Ec=\Ec\ad\circ\Lambda'_{\tilde\P}\circ\Ec$, i.e., the two channels lead to the same POVM-incoherent operation.

For that, we parameterize the BIC property, i.e., the first property of Def.~\ref{def:povmmio}. Let $\Lambda'$ be a superoperator on $\Sc'$, 
which can be made block-incoherent in the following way.
If the linear map
\begin{align}\label{aux7}
\Lambda'_\P&=\Lambda'-\Delta^\perp\circ\Lambda'\circ\Delta \ncl
&=\Lambda'+\Delta\circ\Lambda'\circ\Delta-\Lambda'\circ\Delta,
\end{align}
is a channel, it is in the class MIO, since then and only then $\Lambda'_\P\circ\Delta=\Delta\circ\Lambda'_\P\circ\Delta$ holds, which is the definition of a block-incoherent channel. Here, $\Delta^\perp=\id-\Delta$ is the superoperator that sets the blocks on the diagonal to zero. Hence, we substract the part of $\Lambda'$ which can create block coherence, namely $\Delta^\perp\circ\Lambda'\circ\Delta$. 

In the proof of Lemma \ref{lem:pbmeas}, we have established the identity $P_i\Pie=Q\tilde P_i\Pie$ with a unitary operator $Q$. This implies that
$\Delta[\rho\op0]=Q\tilde\Delta[\rho\op0]Q\ad$, which in turn yields $\Delta\circ\Ec=\Qc\circ\tilde\Delta\circ\Ec$, with the unitary channel $\Qc[\cdot]=Q\cdot Q\ad$.
By taking the adjoint we obtain $\Ec\ad\circ\Delta=\Ec\ad\circ\tilde\Delta\circ\Qc\ad$. 
Another implication of $P_i\Pie=Q\tilde P_i\Pie$ is obtained by summing over $i$, leading to
$Q\Pie=\Pie$, which in turn implies that $\Qc|_{\Se}=\id$, and also $\Qc\circ\Ec=\Ec$, because $\im\Ec=\Se$.

Finally, we can investigate the relation of two POVM-incoherent operations defined with respect to $\P$ and $\tilde\P$, respectively, by employing Eq.~(\ref{aux7}),
\begin{align}\label{aux8}
&\ \Lambda_{\P}=\Ec\ad\circ\Lambda'_\P\circ\Ec \ncl
&=\Ec\ad\circ\Lambda'\circ\Ec + \Ec\ad\circ\Delta\circ\Lambda'\circ\Delta\circ\Ec
-\Ec\ad\circ\Lambda'\circ\Delta\circ\Ec \ncl
&=\Ec\ad\circ\Lambda'\circ\Ec + \Ec\ad\circ\tilde\Delta\circ\Qc\ad\circ\Lambda'\circ\Qc\circ\tilde\Delta\circ\Ec
-\Ec\ad\circ\Lambda'\circ\Qc\circ\tilde\Delta\circ\Ec \ncl
&=\Ec\ad\circ\Qc\ad\circ\Lambda'\circ\Qc\circ\Ec + \Ec\ad\circ\tilde\Delta\circ\Qc\ad\circ\Lambda'\circ\Qc\circ\tilde\Delta\circ\Ec \ncl
&\quad-\Ec\ad\circ\Qc\ad\circ\Lambda'\circ\Qc\circ\tilde\Delta\circ\Ec \ncl
&=\Ec\ad\circ\tilde\Lambda'\circ\Ec + \Ec\ad\circ\tilde\Delta\circ\tilde\Lambda'\circ\tilde\Delta\circ\Ec-\Ec\ad\circ\tilde\Lambda'\circ\tilde\Delta\circ\Ec \ncl
&=\Ec\ad\circ\Lambda'_{\tilde \P}\circ\Ec=\Lambda_{\tilde\P},
\end{align}
where in the third and fourth equality we have substituted the relations $\Delta\circ\Ec=\Qc\circ\tilde\Delta\circ\Ec$ and $\Qc\circ\Ec=\Ec$ (and its adjoint), respectively. 
Here,  $\tilde \Lambda'\defeq\Qc\ad\circ\Lambda'\circ\Qc$ and
$\Lambda'_{\tilde\P}=\tilde\Lambda'+\tilde\Delta\circ\tilde\Lambda'\circ\tilde\Delta-\tilde\Lambda'\circ\tilde\Delta$ are channels. 
However, $\Lambda'_{\tilde\P}$ cannot be considered to be in $\mio$ with respect to $\tilde \P$, because it can possibly map out of $\tilde\Sc'$. We can solve that issue as follows. Note that the restriction of $\tilde Q\defeq Q|_{\tilde\H'}\to\H'$ to $\tilde\H'$ is an isometry, from which we obtain the isometric channel $\tilde{\Qc}[\cdot]=\tilde Q\cdot \tilde Q$.
For an isometric channel we can define a reversal channel $\mathcal R[\rho']\defeq\tilde{\Qc}\ad[\rho']+\tr[(\1-\Pi_{\im\Qc})\rho']\tilde\sigma$, where $\rho'\in\Sc'$ and $\tilde\sigma$ is a state in $\tilde\Sc'$ such that $\Ec\ad[\tilde\sigma]=\Ec\ad\circ\Delta[\tilde\sigma]=0$. Now define $\bar\Lambda'=\mathcal R\circ\Lambda'\circ\tilde{\Qc}$ and $\bar\Lambda'_{\tilde\P}=\bar\Lambda'+\tilde\Delta\circ\bar\Lambda'\circ\tilde\Delta-\bar\Lambda'\circ\tilde\Delta$, which is in $\mio$ of $\tilde \P$. Moreover, by construction it holds that $\Ec\ad\circ\mathcal R=\Ec\ad\circ\Qc\ad$ and $\Ec\ad\circ\Delta\circ\mathcal R=\Ec\ad\circ\Delta\circ\Qc\ad$, which implies that $\Ec\ad\circ\bar\Lambda'_{\tilde \P}\circ\Ec=\Lambda_{\tilde\P}$.
Thus, Eq.~(\ref{aux8}) implies that $\Lambda'_\P$ and $\bar\Lambda'_{\tilde \P}$ lead to the same POVM-incoherent operation which is the desired relation from which the independence property follows.

With the independence property established in the previous paragraph we can show that if $\E$ is an orthogonal rank-1 measurement POVM-incoherent operations are equivalent to coherence $\mio$ channels. Since in this case $\E$ is already projective, we can choose the trivial Naimark space $\H'=\H\ot\Cb\simeq \H$ and $\P=\E$. Then, subspace-preservation is trivially fulfilled for all channels from $\Sc$ to itself, while the block-incoherent condition is equivalent to the $\mio$ condition in standard coherence theory. Since POVM-incoherent operations are independent of the chosen Naimark extension the assertion also holds for any other Naimark extension of $\E$.

Finally, we show that the set of POVM-incoherent operations can be characterized by a semidefinite program. Let $\mathcal B=\{B_\alpha\}_\alpha=\{\ketbra{i}{j}\}_{i,j=1}^{d'}$ be the (Hilbert-Schmidt-orthonormal) standard matrix basis of operators on $\H'$ in lexicographical order. Let $\vecr:\Sc'\to\Cb^{d'^2}$ be the isomorphism that maps a state $\rho$ on the Naimark space to its coordinate vector $\vecr(\rho)$ with respect to $\mathcal B$.
To any superoperator $\Lambda'$ on $\Sc'$, we associate its coordinate matrix with respect to $\mathcal B$, called the process matrix,
\begin{align}
\hat \Lambda'_{\alpha,\beta} = \tr[B_\alpha\ad\Lambda'[B_\beta]],
\end{align}
which has the property that $\hat\Lambda' \vecr(\rho)=\vecr(\Lambda'[\rho])$~\cite{havel2003robust}. 
The process matrix $\hat \Lambda'$ is related to the Choi matrix $J(\Lambda')$ of $\Lambda'$ as~\cite{havel2003robust}
\begin{align}
\hat \Lambda' = d'J(\Lambda')^R, \quad X^R\defeq \sum_\alpha(\1\ot B_\alpha)X(B_\alpha\ot\1),
\end{align}
where the mapping $X\to X^R$ is an involution, called row-reshuffling~\cite{wood2011tensor}.
On the level of transfer matrices the composition of superoperators $\Ec\circ\Fc$ becomes multiplication $\hat\Ec\hat\Fc$.
With that we can characterize POVM incoherent operations via a semidefinite feasibility problem. A system channel $\Lambda$ on $\Sc$ is POVM-incoherent if and only if there exists a Choi matrix $J$ on $\H'\ot\H'$ such that 
\begin{align}
\textrm{find:}\quad&\hat{\Ec}\ad J^R \hat{\Ec}=\hat\Lambda \ncl
\textrm{subj. to:}\quad&J\geq 0, \quad \tr_1J=\textstyle{\frac{\1}{d'}}, \ncl
&J^R\hat\Delta = \hat\Delta J^R \hat\Delta, \ncl
&J^R\hat\Omega = \hat\Omega J^R \hat\Omega.
\end{align}
Here $\tr_1$ denotes the trace over the first subsystem of $\H'\ot\H'$,
and $\Ec[\rho]=\rho\op0$. Moreover, $\Delta$ denotes the block-dephasing operator and $\Omega[\rho']\defeq\Pie\rho'\Pie$, with $\Pie$ being the projector onto $\Se$.
The SDP characterization allows for an efficient numerical check whether a channel is element of the set of POVM-incoherent operations.
\end{proof}

The fidelity between two quantum states $\rho,\sigma$ is given by
$F(\rho,\sigma)=\tr\sqrt{\sqrt\rho\sigma\sqrt\rho}$. We define the quantity $F_{\max}(\rho,\sigma)=\max_{\Lambda_\pic}F(\Lambda_\pic[\rho],\sigma)$ between the states $\sigma$ and $\Lambda_\pic[\rho]$, maximized over all POVM-incoherent operations $\Lambda_\pic$ of a POVM $\E$. The quantity characterizes the usefulness of a particular state $\rho$ when only POVM-incoherent operations can be implemented, as it provides a measure of how well $\sigma$ can be approximated.
As a consequence of the SDP characterization of POVM-incoherent operations we are able to efficiently numerically calculate $F_{\max}$.

\begin{proposition}
The fidelity $F_{\max}(\rho,\sigma)=\max F(\Lambda_\pic[\rho],\sigma)$ equals the solution of the following semidefinite program
\begin{align}
F_{\max}(\rho,\sigma)= \ncl
\textrm{maximize:}\quad&\frac12(\tr[X]+\tr[X\ad]) \ncl
\textrm{subj. to:}\quad&
\begin{pmatrix}
\sigma & X \\
X\ad & \Lambda[\rho]
\end{pmatrix}\geq 0,  \ncl
& \Lambda[\rho]=\vecr\inv(\hat{\Ec}\ad J^R\hat\Ec \vecr(\rho)) \ncl
& J\geq 0, \quad \tr_1J=\textstyle{\frac{\1}{d'}}, \ncl
& J^R\hat\Delta = \hat\Delta J^R \hat\Delta, \ncl
& J^R\hat{\Omega} = \hat{\Omega} J^R \hat{\Omega}.
\end{align}
\end{proposition}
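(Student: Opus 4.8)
The plan is to fuse two semidefinite representations: the standard SDP for the Uhlmann fidelity, and the SDP characterization of POVM-incoherent operations obtained in the proof of Theorem~\ref{thm:povmmio}. For the first ingredient, recall the well-known semidefinite representation of the fidelity,
\begin{align}\label{eq:fidsdp}
F(\tau,\sigma)=\max_{X}\left\{\frac12\bigl(\tr[X]+\tr[X\ad]\bigr):\begin{pmatrix}\sigma & X\\ X\ad & \tau\end{pmatrix}\geq 0\right\},
\end{align}
valid for any two states $\tau,\sigma$. This encodes $F(\tau,\sigma)=\lVert\sqrt{\tau}\sqrt{\sigma}\rVert_1$: the block matrix in \eqref{eq:fidsdp} is positive semidefinite precisely when $X=\sqrt{\sigma}\,W\sqrt{\tau}$ for some contraction $W$, and maximizing $\frac12(\tr[X]+\tr[X\ad])=\operatorname{Re}\tr[W\sqrt{\tau}\sqrt{\sigma}]$ over contractions $W$ yields the trace norm $\lVert\sqrt{\tau}\sqrt{\sigma}\rVert_1$.

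For the second ingredient, the proof of Theorem~\ref{thm:povmmio} shows that a system channel $\Lambda$ is a POVM-incoherent operation of $\E$ if and only if there exists a matrix $J$ on $\H'\ot\H'$ with $J\geq 0$, $\tr_1 J=\1/d'$, $J^R\hat\Delta=\hat\Delta J^R\hat\Delta$ and $J^R\hat\Omega=\hat\Omega J^R\hat\Omega$, in which case $\Lambda$ acts as $\Lambda[\rho]=\vecr\inv(\hat{\Ec}\ad J^R\hat{\Ec}\,\vecr(\rho))$. All four conditions on $J$ are affine, since reshuffling $X\mapsto X^R$ and conjugation by the transfer matrices $\hat{\Ec},\hat\Delta,\hat\Omega$ are linear, so the admissible $J$ form a spectrahedron; moreover these conditions force $\Lambda_\pic$ to be completely positive and trace preserving, so $\Lambda[\rho]$ is a genuine state and $F(\Lambda[\rho],\sigma)$ is well defined.

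Combining the two, $F_{\max}(\rho,\sigma)=\max_{\Lambda_\pic}F(\Lambda_\pic[\rho],\sigma)$ becomes, after replacing the inner fidelity by \eqref{eq:fidsdp} with $\tau=\Lambda[\rho]$, a joint maximization over $J$ (over the spectrahedron above, with $\Lambda[\rho]$ its affine image) and over $X$ (subject to block-positivity with $\tau=\Lambda[\rho]$), carrying the single real-linear objective $\frac12(\tr[X]+\tr[X\ad])$. Since every remaining constraint is affine in $(X,J)$ apart from the two linear matrix inequalities $J\geq 0$ and $\begin{pmatrix}\sigma & X\\ X\ad & \Lambda[\rho]\end{pmatrix}\geq 0$, this is exactly the semidefinite program in the statement. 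The point needing care is the interchange of the two maxima: for each feasible $J$ the inner maximum over $X$ equals $F(\Lambda[\rho],\sigma)$ by the first step, so the supremum over $J$ reproduces $F_{\max}$, while conversely any feasible pair $(X,J)$ of the combined program yields a POVM-incoherent $\Lambda_\pic$ together with a matrix $X$ satisfying $\frac12(\tr[X]+\tr[X\ad])\le F(\Lambda_\pic[\rho],\sigma)\le F_{\max}(\rho,\sigma)$, giving the reverse inequality. I expect the main obstacle to be purely bookkeeping within the transfer-matrix/Choi formalism --- verifying that $\vecr\inv(\hat{\Ec}\ad J^R\hat{\Ec}\,\vecr(\rho))$ is indeed $\Lambda_\pic[\rho]$ and that feasibility of $J$ is equivalent to $\Lambda_\pic$ being POVM-incoherent --- but this has already been carried out in the proof of Theorem~\ref{thm:povmmio}.
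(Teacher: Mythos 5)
Your proposal is correct and follows essentially the same route as the paper, which simply cites the standard semidefinite representation of the Uhlmann fidelity and combines it with the SDP characterization of POVM-incoherent operations from the proof of Theorem~\ref{thm:povmmio}. Your write-up merely makes explicit the details the paper leaves implicit (the contraction characterization of the block LMI and the interchange of the two maximizations), both of which are handled correctly.
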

\begin{proof}
The fidelity between two arbitrary quantum states can be cast in the form of an SDP~\cite{piani2016hierarchy,watrous2012simpler}. Combining this with the SDP characterization of POVM-incoherent operations from the proof of Theorem 1 proves the assertion.
\end{proof}

\section{Example: qubit trine POVM}\label{app:qubitcoh}

In this section we apply all previously obtained results to study the POVM-based coherence theory of qubit POVMs.

\subsubsection*{Coherence theory of mixed-unitary channel}
The simplest example for a POVM-based coherence theory is obtained from the Kraus operators of a mixed-unitary channel which lead to the POVM $\E=\{p_1\1,\dotsc,p_{n}\1\}$ with a probability distribution $\{p_i\}$. 
The canonical Naimark extension is given by $\P=\{\1\ot\proj{\varphi_i}\}$, where $\ket{\varphi_i}$ is an orthonormal basis of $\H_R$ such that $\lvert\braket{\varphi_i}{1}\rvert^2=p_i$. 
In this case all states have the same coherence of $C_{\rel}(\rho,\E)=H(\{p_i\})$, since the system-apparatus interaction is just a local unitary generating coherence in the measurement apparatus.
As a consequence all system channels $\Lambda$ on $\Sc$ are POVM-incoherent, since the embedding $\Lambda\ot\id$ is subspace-preserving and commutes with the dephasing operation $\Delta[\cdot]=\sum_iP_i\cdot P_i$ and thus maps incoherent states to themselves, $(\Lambda\ot\id)[\Delta[\rho]]=\Delta[(\Lambda\ot\id)[\rho]]$.

\subsubsection*{Coherence theory of qubit trine POVM}
In this section we apply the results from the main text to study the POVM-based coherence theory of the qubit trine POVM which is given by
\begin{align}\E=
 \left\{ 
\frac13
\begin{pmatrix}
1 & 1 \\
1 & 1 \\
\end{pmatrix},
\frac13
\begin{pmatrix}
1 & \omega^* \\
\omega & 1 \\
\end{pmatrix},
\frac13
\begin{pmatrix}
1 & \omega \\
\omega^* & 1 \\
\end{pmatrix}
 \right\},
\end{align}
with $\omega=e^{\frac{2\pi}3 i}$, and $\omega^*=\omega^2$.
Since our resource theory is independent of the choice of Naimark extension, it is numerically advantageous to employ the Naimark extension of smallest dimension. Such a minimal Naimark extension of $\E$ is given by $\P=\{\proj{\varphi_i}\}_{i=1}^3$ on $\H'=\Cb^3$ with
\begin{align}
\ket{\varphi_1} &=  \frac{1}{\sqrt{3}} (\ket{1}+\ket{2}+\ket{3}) \\
\ket{\varphi_2} &=  \frac{1}{\sqrt{3}} (\ket{1}+\omega\ket{2}+\omega^*\ket{3}) \\
\ket{\varphi_3} &= \frac{1}{\sqrt{3}}  (\ket{1}+\omega^*\ket{2}+\omega\ket{3}.
\end{align}
Any incoherent state on $\Sc'$ with respect to $\P$ can be written as
\small
\begin{align}
&\rho_\inc= \sum_i^3p_i\proj{\varphi_i}= \ncl
&\frac13
\begin{pmatrix}
 1 & {p_1}+\omega^*{p_2}+\omega {p_3} & {p_1}+\omega p_2+\omega^* {p_3} \\
 {p_1}+\omega{p_2}+\omega^* {p_3} & 1 & p_1 + \omega^* p_2 + \omega p_3 \\
 {p_1}+\omega^* p_2+\omega {p_3} & p_1 + \omega p_2 + \omega^* p_3 & 1 \\
\end{pmatrix},
\end{align}
\normalsize
for some probability distribution $\{p_i\}_{i=1}^3$.
Moreover, a general embedded system state in $\Se\subseteq\Sc'$ is of the form
\begin{align}
\Ec[\rho]=\rho\op0 = 
\begin{pmatrix}
 \rho_{11} & \rho_{12} & 0 \\
 \rho_{21} & \rho_{22} & 0 \\
 0 & 0 & 0 \\
\end{pmatrix},
\end{align}
with $\rho\in\Sc$. Finally, any dephased embedded state reads as
\begin{align}
\Delta[\rho\op0] = \frac13
\begin{pmatrix}
1 & {\rho_{12}} & {\rho_{21}} \\
 {\rho_{21}} & 1 & {\rho_{12}} \\
 {\rho_{12}} & {\rho_{21}} & 1 \\
\end{pmatrix}.
\end{align}

We now provide a characterization of the POVM-incoherent unitaries of the trine POVM. Let $U'$ be a unitary on the Naimark space $\H'=\Cb^3$ that is
\begin{enumerate}[label=\roman{*}.]
 \item \emph{(Naimark-) incoherent:} $U'\ket{\varphi_i} \propto \ket{\varphi_j}$
 \item \emph{Subspace-preserving:} $U' \begin{pmatrix}\ket{\psi}\\0\end{pmatrix} = \begin{pmatrix}\ket{\psi'}\\0\end{pmatrix}$,
\end{enumerate}
with $\ket{\psi},\ket{\psi'}\in\Cb^2$ and $\ket{\varphi_i}$ being the i-th measurement vector of the Naimark extension. Then we call the operator $U^{\operatorname{trine}}$ on $\H=\Cb^2$ given by
\begin{align}
U^{\operatorname{trine}}=\begin{pmatrix} 1 & 0 & 0 \\
0 & 1& 0
\end{pmatrix}
U'
\begin{pmatrix} 1 & 0  \\
0 & 1 \\
0 & 0
\end{pmatrix}
\end{align}
a POVM-incoherent unitary of the trine POVM.
It has been shown~\cite{bai2015maximally} that all qutrit incoherent unitaries are of the form
\begin{align}
U_\pi' = \sum_{i=1}^3 e^{i\alpha_i} \ketbra{\varphi_{\pi(i)}}{\varphi_i},
\end{align}
where $\alpha_i\in\Rb$ and $\pi=(\pi(1)\ \pi(2)\ \pi(3))\in S_3$ is one of the six permutations of a three-element set. Thus, there are six classes of 3-parameter incoherent unitaries.
Moreover, the subspace-preservation condition \romannumeral2.\ is fulfilled for all $\ket{\psi}\in\Cb^2$ if and only if the unitary is of the form
\begin{align}\label{aux11}
U_\pi' = 
\begin{pmatrix}
 * & * & * \\
 * & * & * \\
 0 & 0 & * \\
\end{pmatrix},
\end{align}
where $*$ denotes some complex entry. Therefore, to obtain a POVM-incoherent unitary we require that $U_\pi'$ satisfies $(U'_\pi)_{3,1} = (U'_\pi)_{3,2}=0$. This yields the POVM-incoherent unitary $U^{\operatorname{trine}}$ as the upper left $2\times2$ block of the resulting matrix. The following list contains all trine POVM-incoherent unitaries:
\begin{align}U_{(123)}^{\operatorname{trine}}&=
\begin{pmatrix}
1 & 0 \\
0 & 1 \\
\end{pmatrix}
=\1, \ncl
U_{(231)}^{\operatorname{trine}}&=
\begin{pmatrix}
\sqrt{\omega^*} & 0\\
0 & \sqrt\omega \\
               \end{pmatrix} = R_{\vec e_z}(\frac{2\pi}3), \ncl
U_{(312)}^{\operatorname{trine}}&=\begin{pmatrix}
\omega^* & 0 \\
0 & \omega \\
               \end{pmatrix} = R_{\vec e_z}(\frac{4\pi}3),\ncl
U_{(132)}^{\operatorname{trine}}&=
\begin{pmatrix}
0 & -i \\
-i & 0\\	
               \end{pmatrix} = R_{\vec m_1}(\pi), \ncl
U_{(321)}^{\operatorname{trine}}&=
\begin{pmatrix}
0 & \omega^{\frac54} \\
\omega^{\frac14} & 0\\	
               \end{pmatrix} = R_{\vec m_2}(\pi),\ncl
U_{(213)}^{\operatorname{trine}}&=
\begin{pmatrix}
0 & \omega^{\frac14} \\
\omega^{\frac54} & 0 \\
               \end{pmatrix} = R_{\vec m_3}(\pi),
\end{align}
Up to a phase, any qubit unitary can be expressed as $R_{\vec n}(\theta) = e^{-i\frac{\theta}2 \vec n\cdot \vec\sigma}\in\operatorname{SU}(2)$, namely as the rotation around the bloch vector $\vec n$ with angle $\theta$. Here, $\vec m_i$ denotes the Bloch vector of the measurement vector $\ket{\varphi_i}$, and $U_\pi^{\operatorname{trine}}$ denotes the POVM-incoherent unitary obtained from $U'_\pi$.
This set is composed of the six rotations that leave the equilateral triangle, whose vertices are given by the measurement direction vectors $\{\vec m_i\}$, invariant.
There are no continuous degrees of freedom left, since the two subspace-preserving conditions together with the requirement of having unit determinant uniquely determines the parameters $\alpha_i$.

At last, we discuss the usefulness of a maximally coherent state $\ket{\Psi_{\operatorname{m}}}\in\{\ket{0},\ket{1}\}$ for the POVM-based coherence theory of the trine POVM. We have numerical evidence that the transformation $\proj{\Psi_{\operatorname{m}}}\to\rho$ with $\rho\in\Sc$ is always possible by a POVM-incoherent map.
Concretely, by plotting the value of $F_{\max}(\ket{\Psi_{\operatorname{m}}},\sigma)$ for any pure state $\sigma=\proj{\psi}$, we observe that all pure states can be obtained with certainty from $\ket{\Psi_{\operatorname{m}}}$ under POVM-incoherent operations. Therefore, all qubit states $\rho=\sum_ip_i\proj{i}$ can be obtained from $\ket{\Psi_{\operatorname{m}}}$ by a POVM-incoherent map, namely via preparing the eigenstate $\ket{i}$ with probability $p_i$.

\end{document}